\def\Y{\mathcal{Y}}
\def\ru{{\rm U}}
\def\Cov{{\rm Cov}}
\def\RR{{\mathbf{ R}}}
\numberwithin{equation}{section} 
\numberwithin{figure}{section} 
\numberwithin{table}{section} 
\let\oldtabular\tabular
\renewcommand{\tabular}{\scriptsize\oldtabular}
\newtheorem{theorem}{Theorem}
\theoremstyle{remark}
\newtheorem*{remark}{Remark}
\title{\bf Adaptive Component-wise Multiple-Try Metropolis Sampling}
\author{\large Jinyoung Yang\thanks{Department of Statistical Sciences, University of Toronto, Toronto, Ontario, Canada M5S 3G3. Email: jinyoung.yang@mail.utoronto.edu}
\ , Evgeny Levi\thanks{Department of Statistical Sciences, University of Toronto, Toronto, Ontario, Canada M5S 3G3. Email: evgeny@utstat.utoronto.ca}, Radu V. Craiu\thanks{Department of Statistical Sciences, University of Toronto, Toronto, Ontario, Canada M5S 3G3. Email: craiu@utstat.utoronto.ca}
\ , and Jeffrey S. Rosenthal\thanks{Department of Statistical Sciences, University of Toronto, Toronto, Ontario, Canada M5S 3G3. Email: jeff@math.toronto.edu}}
\date{(March 2016; last revised March 2017)}
\begin{document}

\maketitle

\begin{abstract}

One of the most widely used samplers in practice is the component-wise Metropolis-Hastings (CMH) sampler that updates in turn the components of a vector valued Markov chain using accept-reject moves generated from a proposal distribution. When the target distribution of a Markov chain is irregularly shaped, a `good' proposal distribution for one part of the state space might be a `poor' one for another part of the state space. We consider a component-wise multiple-try Metropolis (CMTM) algorithm  that can automatically choose from  a set of candidate moves sampled from different distributions. The computational efficiency is increased using an adaptation rule for the CMTM algorithm that dynamically  builds a better set of proposal distributions as the Markov chain runs. The ergodicity of the adaptive chain is  demonstrated theoretically. The performance is studied via simulations and  data examples.

\end{abstract}

{\it Keywords:} Adaptive Markov chain Monte Carlo, Component-wise Metropolis-Hastings, Multiple-try Metropolis. 

\section{Introduction}
\label{sec:intro}

Markov chain Monte Carlo (MCMC) methods are widely used to analyze complex probability distributions, especially within the Bayesian inference paradigm. One of the most used MCMC algorithms is the Metropolis-Hastings (MH) sampler, first developed by Metropolis et al. \citep{metropolis1953equation}, and later expanded by \cite{hastings1970monte}.  At each iteration the MH algorithm samples a  candidate new state from a proposal distribution which is subsequently accepted or rejected. When the state space of the chain is high dimensional or irregularly shaped, finding a good proposal distribution that can be used to update {\it all} the components of the chain simultaneously is very challenging, often impossible. The  optimality results for the acceptance rate of the Metropolis-Hastings algorithm \citep{gelman1996efficient, roberts2001optimal} have inspired the development of   the so-called {\it adaptive MCMC (AMCMC)} samplers that are designed to adapt their transition kernels based on the gradual  information about the target  that is collected through the very samples they produce. Successful designs can be found in   \cite{haario2001adaptive}, \cite{haario2006dram}, \cite{turro2007bgx}, \cite{roberts2009examples}, \cite{cry}, \cite{giordani2010adaptive}, and \cite{vihola2012robust} among others. Theoretical difficulties arise because the adaptive chains are no longer Markovian so ergodicity properties must be proven on a case-by-case basis. Attempts at streamlining the theoretical validation process for AMCMC samplers have been increasingly successful including
 \cite{atchade2005adaptive}, \cite{andrieu2006ergodicity}, \cite{andrieu2006efficiency}, \cite{roberts2007coupling}, \cite{fort2011convergence} and  \cite{craiu2014stability}.  
 For useful reviews of AMCMC we refer to \cite{tut-amcmc} and
\cite{roberts2009examples}.  
Despite many success stories, it is our experience that  existing adaptive strategies for MH  in high dimensional spaces may take a very long time to ``learn'' good simulation parameters so that the samplers may not improve much before the simulation is ended. 

We can  increase the computational efficiency if instead of using a full MH to update all the components at once, we choose  to update the components of the chain one-at-a-time. In this case the update rule follows the MH transition kernel but the acceptance or rejection is based on the target's conditional distribution of that component given all the other ones. More precisely, if we are interested in sampling from the continuous  density $\pi(x):  \mathcal{X} \subset \textbf{R}^d \rightarrow  \textbf{R}_{+} $; the component-wise MH (CMH)  transition kernel updates  the $i$th component of the chain, $x_{i}$, using a proposal $y_i \in \RR$, $y_i\sim T_{i}(\cdot|x_{i})$  and setting the next value of the chain as 
\[
z=\left \{ 
\begin{array}{cc}
(x_{1},\ldots,x_{i-1},y_{i},x_{i+1},\ldots, x_{d}) & \mbox{ w.p. } \alpha_{i}\\
x & \mbox{ w.p. }  1- \alpha_{i}\\
\end{array}
\right .
\]
where $$\alpha_{i}=\min\left\{ 1, { T(x_{i}|y_{i}) \pi(y_{i}|x_{[-i]}) \over T(y_{i}|x_{i})\pi(x_{i}|x_{[-i]})}\right\},$$
and $\pi(\cdot|x_{[-i]})$ is the target conditional distribution of the $i$th component given all the other components $x_{[-i]}=(x_{1},\ldots, x_{i-1},x_{i+1},\ldots, x_{d})$.  The CMH replaces the difficult problem of finding one good proposal in $d$ dimensions with the  easier problem of finding $d$ good 1-dimensional proposals. However, the latter task can also prove difficult if the conditional densities $\pi(\cdot|x_{[-i]})$ change significantly, e.g. have very different variances,  as 
$x_{[-i]}$ varies.  Intuitively, let us imagine that for a region of the sample space  of $x_{[-i]}$ the proposal $T_{i}$ must have a higher spread for the chain to mix well and a smaller one for the remaining part of the support. In this case an adaptive strategy based on a single proposal distribution cannot be efficient everywhere in the support of $\pi$.  Some success has been obtained  in lower dimensions or for distributions with a well-known structure using the regional adaptive MCMC strategies of  \cite{cry} or \cite{cbd}, but extending those approaches can be  cumbersome when $d$ is even moderately large. 
Other adaptive MCMC  ideas  proposed for the CMH too include
\cite{haario2005componentwise} where the authors propose to use component-wise random walk Metropolis (RWM) and to use the component-specific sample variance to tune the proposal's variance, along the same lines that were used by \cite{haario2001adaptive}  to adapt the proposal distribution for the joint RWM.  Another intuitive approach is  proposed in \cite{roberts2009examples} who aim for a particular acceptance rate for each component update. 

The strategy we propose here aims to close the gap that still exists between AMCMC and efficient CMH samplers.  When contemplating the problem, one may be tempted to try to ``learn'' each conditional distribution $\pi(\cdot| x_{[-i]})$, but parametric models are likely not flexible enough and nonparametric models will face the curse of dimensionality even for moderate values of $d$. Note that here the difficult part is understanding how the conditional distribution changes as $x_{[-i]}$ varies, which is a $(d-1)$-dimensional problem.

Before getting to the technical description of the algorithm, we
present here the intuitive idea behind our design.  Within the
CMH algorithm imagine that for each component we can propose
$m$ candidate moves, each generated from $m$ different proposal
distributions. Naturally, the latter will be selected to have
a diverse range of variances so that we generate some  proposals
close to the current location of the chain and some that are further
away. If we assume that the transition kernel for each component is
such that  among the proposed states it will select the one that is
most likely to lead to an acceptance, then one can reasonably infer
that this approach will improve the mixing of the chain provided that
the proposal distributions  are reasonably calibrated. To mirror the
discussion above, in a region where $T_{i}$ should have small spread,
one wants to have among the proposal distributions a majority with
small variances, and similarly in regions where $T_{i}$ should be
spread out we want to include among our proposal densities a majority
with larger variances.
This intuition can be tested using an approach
based on the multiple-try Metropolis (MTM) that originated
with \cite{liu2000multiple} and was further generalized by
\cite{casarin2013interacting}.

This paper is organized as follows.
Section \ref{sec:cmtm} introduces a component-wise multiple-try
Metropolis (CMTM) algorithm.  In Section \ref{sec:adpmtm}, we add
{\it adaption} to CMTM, creating a new ACMTM algorithm in which the
proposal distributions get modified on the fly according to the local
shape of the target distribution, and we prove (Theorem~\ref{convthm})
convergence of this algorithm.  Section \ref{sec: appl} then applies
the adaptive CMTM algorithm to numerical examples, and compares the
efficiency of the adaptive CMTM algorithm to other adaptive Metropolis
algorithms.

\section{Component-wise multiple-try Metropolis}
\label{sec:cmtm}

\subsection{Algorithm}
\label{sec:algo}


Assume that a Markov chain $\{X_n\}$ is defined on \(\mathcal{X} \subset \textbf{R}^d\) with a target distribution $\pi$. The component-wise multiple-try Metropolis (CMTM) will update the chain one-component-at-a-time using $m$ proposals. Specifically,  the $k$th component of the chain is updated using proposals $\{y_j^{(k)}: \;\; 1\le j \le m\}$ that are sampled from 
$ \{T_j^{(k)} : \;\; 1\le j \le m\}$, respectively.  Let the value of the chain at iteration $n$ be \(X_n = x \in \RR^d\).  One  step of the CMTM involves updating every coordinate $X_k$ of the chain in a fixed order, for $k \in \{1,\ldots, d\}$. The following steps are performed to update the $k$th component:

\noindent
1. Draw proposals \(y_1^{(k)},\ldots,y_m^{(k)}\) where \(y_j ^{(k)}\sim T_j^{(k)}(\cdot | x_k)\). \\
2. Compute 
\begin{equation}
w_j^{(k)}(y_j^{(k)}, x) = \pi(y_j^{(k)}|x_{[-k]}) T_j(x_k |y_j^{(k)}) \lambda_j^{(k)} (y_j^{(k)}, x_k),
\label{weights}
\end{equation}
for each \(y_j^{(k)}\), where  $x_{[-k]}$ denotes the state of the chain without the $k$th component and \(\lambda_j^{(k)}(x_k,y_j^{(k)})\) is a nonnegative symmetric function satisfying \(\lambda_j^{(k)} (x_k, y_j^{(k)}) > 0\) whenever \(T_j^{(k)} (y_j^{(k)}|x_k) >0 \). \\
3. Select one \(y=y_s^{(k)}\) out of \(y_1^{(k)},\ldots,y_m^{(k)}\) with probabilities proportional to \(w_j(y_j^{(k)},x)\). \\
4. Draw \(x^{\ast (k)}_1,\ldots,x^{\ast (k)}_{s-1}, x^{\ast (k)}_{s+1},\ldots, x^{\ast (k)}_m\) where \(x_j^{\ast (k)}\sim T_j^{ (k)}(\cdot | y)\) and set \(x_s^{\ast (k)} = x\).\\
5. Accept $y$ with a probability
\begin{align*}
\rho = \min \Big[1, \frac{w_1(y_1^{(k)}, x)+\ldots+w_m(y_m^{(k)}, x)}{w_1(x^{\ast (k)}, y)+\ldots+w_m(x^{\ast (k)}_m, y)}\Big]
\end{align*}

\noindent 

We note that in step 1.\  the proposal distributions $T_j^{(k)}$  depend only on the $k$th component of the current state of the chain. More general formulations are possible, but make intuitive adaptive schemes more cumbersome and without clear benefits in terms of efficiency. Having dependent proposals can be beneficial  when the proposal distributions are identical \citep{craiu2007acceleration}.  However, in the current implementation the proposals have different scales so the advantage of using dependent proposals is less clear and will not be pursued in this paper.

Whether a proposal distribution is `good'  or not will depend on  the current state of the Markov chain, especially if the target distribution $\pi$ have conditional densities with varying properties, e.g.  different variances, across the target's support.  In addition to choosing the $m$ proposals, an added flexibility of the CMTM algorithm  is that we have freedom in choosing the  nonnegative  symmetric maps  $\lambda_j^{(k)}$ as long as they satisfy \(\lambda_j^{(k)} (x_k, y_j^{(k)}) > 0\) whenever \(T_j^{(k)} (y_j^{(k)}| x_k) >0 \). 
In subsequent sections we show that the CMTM algorithm can benefit from choosing a particular form of the function \(\lambda_j^{(k)} (x_k,y_j^{(k)})\).  

Our choice of $\lambda_j^{(k)}$ is guided by a simple and intuitive principle. 
Between two candidate moves $y_{1}^{(k)}$ and $y_{2}^{(k)}$ that are equally far from the current state we favour  \(y_1^{(k)}\) over \(y_2^{(k)}\) if  \(\pi(y_1^{(k)}|x_{[-k]})\) is greater than \(\pi(y_2^{(k)}|x_{[-k]})\), but  if \(\pi(y_1^{(k)}|x_{[-k]})\) is similar  to \(\pi(y_2^{(k)}|x_{[-k]})\), we would like CMTM to favour whatever candidate is further away from the current state. These simple rules lead us to consider
\begin{align}
\label{eq:lambdafn}
\lambda_j^{(k)}(x,y) = T_j^{(k)} (y_j^{(k)} | x_k)^{-1} \|(y_j^{(k)}-x_k)\|^\alpha, 
\end{align}
where \(\|\cdot\|\) is the Euclidean norm. Note that  this choice of $\lambda_j^{(k)}$ is possible because \(T_j^{(k)} (y_j^{(k)} | x_k)\) is a symmetric function in \(x_k\) and \(y_j^{(k)}\) as  it involves only one draw from a normal distribution with mean $x_k$.  

Replacing \eqref{eq:lambdafn} in  the weights equation \eqref{weights} results in
\begin{align}
w_j^{(k)}(y_j^{(k)}, x)&=\pi(y_j^{(k)}|x_{[-k]}) T_j^{(k)}(x_k |y_j^{(k)}) \lambda_j^{(k)} (y_j^{(k)}, x_k) \nonumber\\
& =\pi(y_j^{(k)}|x_{[-k]})\|(y_j^{(k)}-x_k)\|^\alpha.
\label{mod-weights}
\end{align}
With this choice of $\lambda$,   the selection probabilities are only dependent on the value of the target density  at the candidate point \(y_j^{(k)}\) and the size of the potential jump of the chain, were this candidate accepted. From \eqref{eq:lambdafn} we can see that the size of $\alpha$ will  balance of importance of the attempted jump distance from the current state over the importance of the candidate under $\pi$.  
However, while we understand the trade-off imposed by the choice of $\alpha$ for  selecting a candidate move, it is less clear how it will impact the overall performance of the CMTM, e.g acceptance rate or average jump distance. 

Therefore, it is paramount to gauge what are good choices for the parameter \(\alpha\) for the mixing of the CMTM chain. In the next section we approach this task via  the average squared jumping distance (ASJ) and the autocorrelation time  (ACT). To obtain the average squared jumping distance, we calculate the squared jumping distance for each iteration, \((X_{n+1}-X_n)^2\) and average them over the whole Markov chain run.  Note that if a new proposal is rejected and \((X_{n+1}-X_n)^2\)  is equal to zero, we still add zero to total sum of the squared jumping distances and divide the sum by the total number of iterations. The ACT can be calculated using 
\begin{align*}
\tau = 1+ 2 \sum_{k=1}^{\infty} \rho_k,
\end{align*}
where $\rho_k = \Cov(X_0, X_k)/Var(X_0)$ is the autocorrelation at lag $k$. Higher ACT for a Markov chain implies successive samples are highly correlated, which reduces the effective information contained in any given number of samples produced by the chain. 

While ACT has long been known to relate directly with the variance of the Monte Carlo estimators \citep{geyer-practicMCMC}, the ASJ incorporates both the jump distance and  the acceptance rate, a combination that has turned out to be useful in other AMCMC designs \citep[see for instance][]{cry}. Estimates of ACT and ASJ are  obtained by averaging over the realized path of the chain.  
\subsection{Choice of  \(\alpha\)}
\label{sec:optalpha}

In order to study the influence of the parameter $\alpha$ on the CMTM efficiency we  have conducted a number of simulation studies, some of which are described here.

We considered first a 2-dimensional mixture of two normal distributions

 \begin{minipage}{0.49\textwidth}
 \begin{align}
 0.5  N(\mu_1, \Sigma_1) + 0.5  N(\mu_2, \Sigma_2)
 \label{dist1}
 \end{align}
 where 
 \begin{align*}
 \begin{cases}
\mu_1 &= (5, 0)^T\nonumber\\
 \mu_2 &= (15, 0)^T\nonumber\\
 \Sigma_1 & = \mbox{diag} (6.25, 6.25)\nonumber\\
 \Sigma_2 & = \mbox{diag} (6.25, 0.25)\nonumber
 \end{cases}
 \end{align*}
 \end{minipage} \hspace{0.02\textwidth}
\begin{minipage}{0.49\textwidth}
\begin{figure}[H]
\vspace{-0.01\textheight}
\includegraphics[height=0.2\textheight, width=0.9\textwidth]{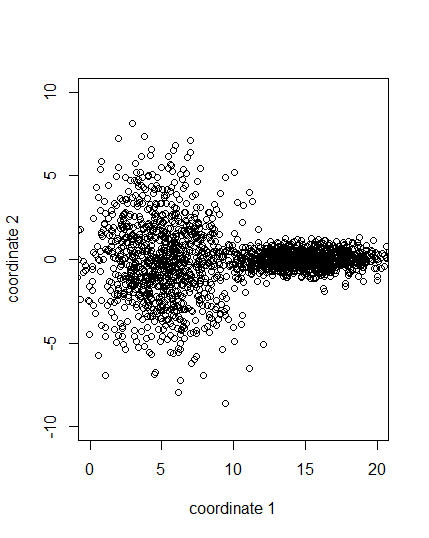}
\caption{Target density plot. 2-dimensional mixture of two normals}
\label{fig:2dimmix1}
\end{figure}
\end{minipage}
\vspace{0.001\textheight}

An iid sample of size $2000$  from \eqref{dist1} is  plotted in  Figure  \ref{fig:2dimmix1}. We run the CMTM algorithm repeatedly with \(\lambda_j ( x,y_{j})\) functions in \eqref{eq:lambdafn} while changing the value of \(\alpha\) from $0.1$ to $15$.  We choose $m=5$ as the number of proposals for each component,  while the proposal standard deviations  \(\sigma_{k,j}\)'s  are for each component  $1, 2, 4, 8$ and $16$. 
\vspace{0.001\textheight}\\
\begin{minipage}{0.49\textwidth}
\hspace{15pt} As we see in Figure \ref{fig:2dimmix1_propselect}, the proportion of each proposal distribution selected increases/decreases as \(\alpha\) changes. As expected, when $\alpha$ increases we see the  selection percentages of the proposal distributions with smaller \(\sigma_{k,j}\)'s drop   and those  with larger \(\sigma_{k,j}\)'s increase.  Figure \ref{fig:2dimmix1_propselect} shows, with larger \(\alpha\)'s, our algorithm favours proposal distributions with larger scales, which  makes sense based on the equation \eqref{mod-weights}.  
 \end{minipage} \hspace{0.02\textwidth}
\begin{minipage}{0.49\textwidth}
\begin{figure}[H]
\vspace{-0.01\textheight}
\includegraphics[height=0.25\textheight, width=0.9\textwidth]{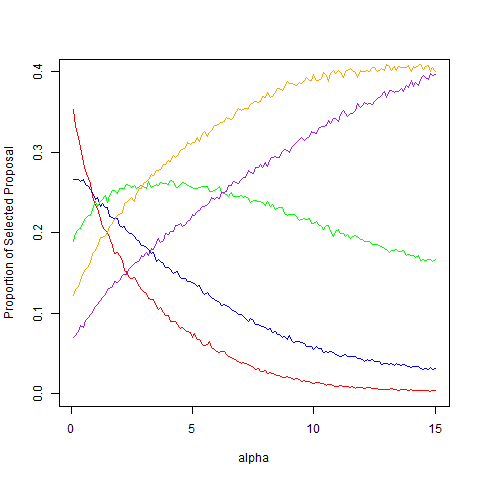}
\caption{Proportion of proposal distribution selected. Coordinate 1: Red, Blue, Green,  Orange and Purple lines show behaviour when
 \(\sigma_{k,j}= 1, 2, 4, 8 ,16\), respectively.}
\label{fig:2dimmix1_propselect}
\end{figure}
\end{minipage}
\vspace{0.001\textheight}

Figure \ref{fig:2dimmix1_alpha} shows how the ASJ and ACT   change as the value of \(\alpha\) changes.  We can infer that the highest efficiency is achieved for $\alpha \in (2,4)$.     

\begin{figure}[H]
\includegraphics[width=0.95\textwidth]{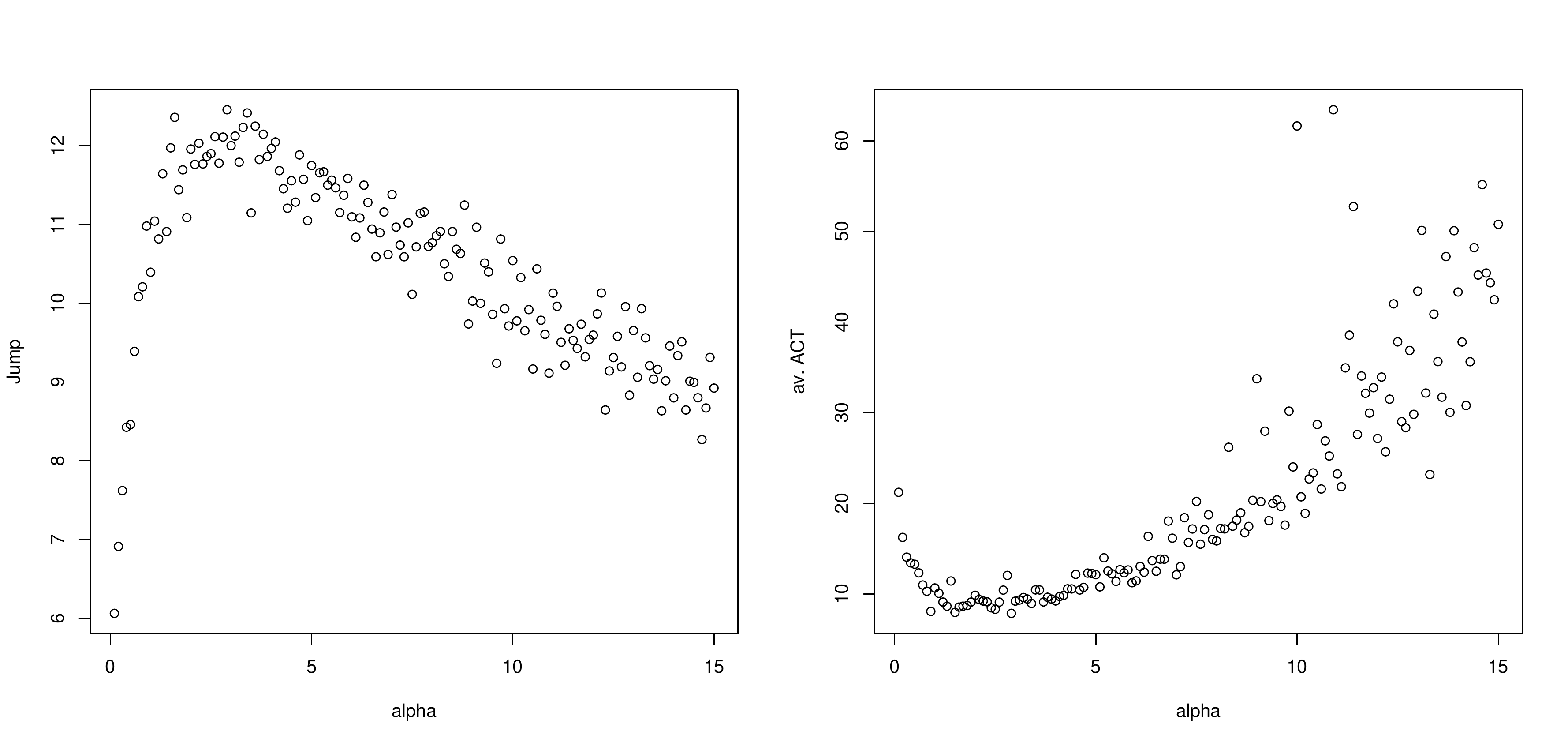}
\caption{Two-Dimensional Mixture of  two Gaussians: ASJ  (left panel) and ACT (right panel) for different values of  \(\alpha\).  For each $\alpha$, the estimates are obtained from a single run with 100,000 iterations.}
\label{fig:2dimmix1_alpha}
\end{figure}

We also examined a  4-dimensional mixture of two normal distributions as our target density:
 \begin{align*}
0.5  N(\mu_1, \Sigma_1) + 0.5 N(\mu_2, \Sigma_2),
 \end{align*}
 where 
 \begin{align*}
 \begin{cases}
 \mu_1 &= (5, 5, 0, 0)^T\nonumber\\
 \mu_2 &= (15, 15, 0, 0)^T\nonumber\\
 \Sigma_1 & = \mbox{diag} (6.25, 6.25, 6.25, 0.01)\nonumber\\
 \Sigma_2 & =  \mbox{diag} (6.25, 6.25, 0.25, 0.01).\nonumber
 \end{cases}
 \end{align*}

 The number of proposals, $m=5$ and \(\sigma_{k,j}\)'s of the set of proposal distributions for each coordinate are $0.5, 1, 2, 4$ and $8$. Figure \ref{fig:4dimmix1_alpha} shows the results. We notice that the ACT measurements are more noisy, while the ASJ ones yield  a more precise message that is in line with the previous example. Once again we can see from  Figure \ref{fig:4dimmix1_alpha} that the average squared jumping distances are largest for  $\alpha\in(2,4)$. 

\begin{figure}[H]
\includegraphics[width=0.95\textwidth]{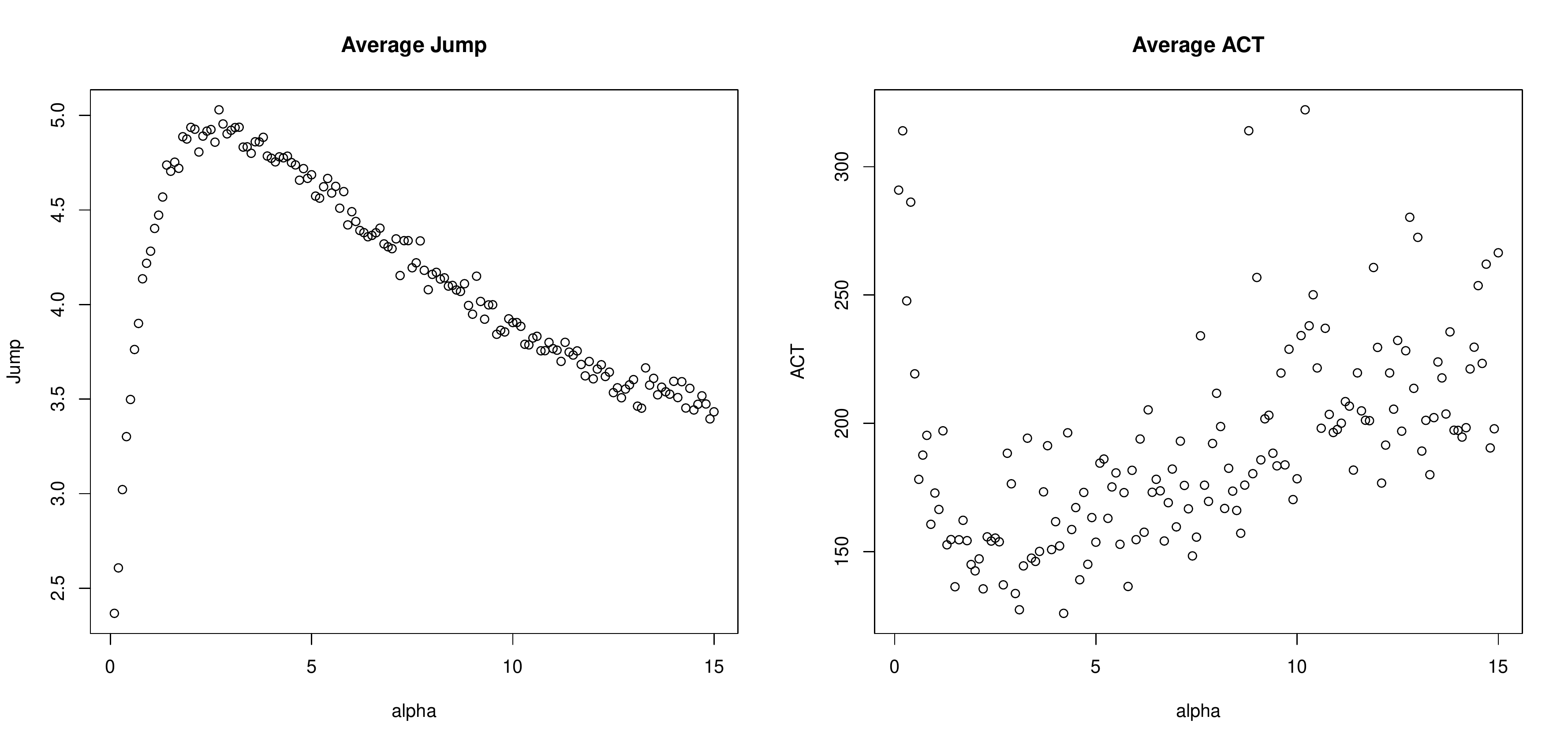}
\caption{4-Dimensional Mixture of two Gaussians: ASJ  (left panel) and ACT (right panel) for different values of  \(\alpha\).  For each $\alpha$, the estimates are obtained from a single run with 100,000 iterations.}
\label{fig:4dimmix1_alpha}
\end{figure}

Other numerical experiments not reported here agree with the two examples presented and suggest that optimal values of $\alpha$ are between $2$ and $4$.  In the absence of theoretical results we cannot claim a universal constant $\alpha$ that would be optimal  in every example. However, based on the available evidence, we believe that a value of $\alpha$ in the $(2,4)$ range will increase the efficiency of the chain.  Henceforth we fix  
$\alpha=2.9$ in all simulations involving CMTM. 

\section{Adaptive Component-wise multiple-try Metropolis}
\label{sec:adpmtm}

\subsection{CMTM Favours Component-wise `Better' Proposal Distributions}
\label{sec:selectprob}

The intuition behind our construction as described in the Introduction, relies on the idea that CMTM will automatically tend to choose the ``right'' proposal among the $m$ possible ones. In this section we verify empirically that this is indeed the case.

 We consider the same  4-dimensional mixture of normal distributions from Section \ref{sec:optalpha} as our target distribution and run the CMTM algorithm. The target parameters are set to reflect the numerical experiments reported in Section \ref{sec: appl}, i.e.  $m=20$ and   $\sigma_{k,j}=2^j$ with $j \in \{-10,-9,\ldots,9\}$. Table \ref{tab: mtm_4dimmixnorm_pselect} reports the selection probabilities computed from  10,000 samples for each proposal and each coordinate. 
 
\begin{table}[h]
\centering
\begin{tabular}{rrrrr}
  \hline
  &\multicolumn{4}{c}{Coordinate}\\
\hhline{~----}
\(\sigma_{k,j}\) & coord1 & coord2 & coord3 & coord4 \\ 
  \hline
$2^{-10}$ & 0.00 & 0.00 & 0.00 & 0.00  \\ 
$2^{-9}$  & 0.00 & 0.00 & 0.00 & 0.00  \\ 
$2^{-8}$  & 0.00 & 0.00 & 0.00 & 0.00  \\ 
$2^{-7}$  & 0.00 & 0.00 & 0.00 & 0.00  \\ 
$2^{-6}$  & 0.00 & 0.00 & 0.00 & 0.00  \\ 
$2^{-5}$  & 0.00 & 0.00 & 0.00 & 0.03  \\ 
$2^{-4}$  & 0.00 & 0.00 & 0.00 & 0.11  \\ 
$2^{-3}$  & 0.00 & 0.00 & 0.01 & 0.25  \\ 
$2^{-2}$  & 0.00 & 0.00 & 0.03 & 0.27  \\ 
$2^{-1}$  & 0.01 & 0.01 & 0.11 & 0.17  \\ 
$2^{0}$   & 0.05 & 0.05 & 0.15 & 0.08  \\ 
$2^{1}$   & 0.15 & 0.14 & 0.19 & 0.04  \\ 
$2^{2}$   & 0.26 & 0.26 & 0.20 & 0.02  \\ 
$2^{3}$   & 0.24 & 0.25 & 0.15 & 0.01  \\ 
$2^{4}$   & 0.14 & 0.14 & 0.08 & 0.01  \\ 
$2^{5}$   & 0.08 & 0.07 & 0.04 & 0.00  \\ 
$2^{6}$   & 0.04 & 0.04 & 0.02 & 0.00  \\ 
$2^{7}$   & 0.02 & 0.02 & 0.01 & 0.00  \\ 
$2^{8}$   & 0.01 & 0.01 & 0.01 & 0.00  \\ 
$2^{9}$   & 0.00 & 0.00 & 0.00 & 0.00  \\ 
   \hline
\end{tabular}
\caption{CMTM: Frequency of selection for each proposal and each coordinate.}
\label{tab: mtm_4dimmixnorm_pselect}
\end{table}

\begin{table}[H]
\setlength{\tabcolsep}{5pt}
    \begin{subtable}{.48\textwidth}
\caption{$X_{n,2} < 8$}
\centering
\begin{tabular}{r|rrrr}
  \hline
  &\multicolumn{4}{c}{Coordinate}\\
\hhline{~----}
\(\sigma_{k,j}\) & coord1 & coord2 & coord3 & coord4 \\ 
  \hline
$2^{-10}$ 	&	 0.00	&	 0.00	&	 0.00	&	 0.00	\\
$2^{-9}$ 	&	 0.00	&	 0.00	&	 0.00	&	 0.00	\\
$2^{-8}$	&	 0.00	&	 0.00	&	 0.00	&	 0.00	\\
$2^{-7}$ 	&	 0.00	&	 0.00	&	 0.00	&	 0.00	\\
$2^{-6}$  	&	 0.00	&	 0.00	&	 0.00	&	 0.01	\\
$2^{-5}$ 	&	 0.00	&	 0.00	&	 0.00	&	 0.03	\\
$2^{-4}$ 	&	 0.00	&	 0.00	&	 0.00	&	 0.10	\\
$2^{-3}$  	&	 0.00	&	 0.00	&	 0.00	&	 0.25	\\
$2^{-2}$  	&	 0.00	&	 0.00	&	 {\bf 0.00}	&	 0.27	\\
$2^{-1}$ 	&	 0.01	&	 0.01	&	 {\bf 0.01}	&	 0.18	\\
$2^{0}$ 	&	 0.05	&	 0.05	&	 {\bf 0.04}	&	 0.09	\\
$2^{1}$ 	&	 0.16	&	 0.14	&	 {\bf 0.17}	&	 0.04	\\
$2^{2}$ 	&	 0.27	&	 0.26	&	 {\bf 0.28}	&	 0.02	\\
$2^{3}$ 	&	 0.24	&	 0.25	&	 {\bf 0.23}	&	 0.01	\\
$2^{4}$ 	&	 0.13	&	 0.14	&	 {\bf 0.13}	&	 0.01	\\
$2^{5}$ 	&	 0.07	&	 0.08	&	 0.07	&	 0.00	\\
$2^{6}$ 	&	 0.03	&	 0.04	&	 0.03	&	 0.00	\\
$2^{7}$  	&	 0.02	&	 0.02	&	 0.01	&	 0.00	\\
$2^{8}$ 	&	 0.01	&	 0.00	&	 0.01	&	 0.00	\\
$2^{9}$ 	&	 0.00	&	 0.00	&	 0.01	&	 0.00	\\
   \hline
\end{tabular}
    \end{subtable}%
    \hspace{0.04\textwidth}
    \begin{subtable}{.48\textwidth}
\caption{$X_{n,2}\geq 8$}
\label{tab:2dimmix1_pselect2}
\centering
\begin{tabular}{r|rrrr}
  \hline
  &\multicolumn{4}{c}{Coordinate}\\
\hhline{~----}
\(\sigma_{k,j}\) & coord1 & coord2 & coord3 & coord4 \\ 
  \hline
$2^{-10}$ 	&	 0.00	&	 0.00	&	 0.00	&	 0.00	\\
$2^{-9}$ 	&	 0.00	&	 0.00	&	 0.00	&	 0.00	\\
$2^{-8}$	&	 0.00	&	 0.00	&	 0.00	&	 0.00	\\
$2^{-7}$ 	&	 0.00	&	 0.00	&	 0.00	&	 0.00	\\
$2^{-6}$  	&	 0.00	&	 0.00	&	 0.00	&	 0.00	\\
$2^{-5}$ 	&	 0.00	&	 0.00	&	 0.00	&	 0.04	\\
$2^{-4}$ 	&	 0.00	&	 0.00	&	 0.00	&	 0.12	\\
$2^{-3}$  	&	 0.00	&	 0.00	&	 0.02	&	 0.24	\\
$2^{-2}$  	&	 0.00	&	 0.00	&	 {\bf 0.06}	&	 0.26	\\
$2^{-1}$ 	&	 0.01	&	 0.01	&	 {\bf 0.20}	&	 0.17	\\
$2^{0}$ 	&	 0.05	&	 0.06	&	 {\bf 0.24}	&	 0.08	\\
$2^{1}$ 	&	 0.14	&	 0.14	&	 {\bf 0.20}	&	 0.04	\\
$2^{2}$ 	&	 0.26	&	 0.26	&	 {\bf 0.13}	&	 0.02	\\
$2^{3}$ 	&	 0.24	&	 0.25	&	 {\bf 0.08}	&	 0.01	\\
$2^{4}$ 	&	 0.14	&	 0.14	&	 {\bf 0.03}	&	 0.00	\\
$2^{5}$ 	&	 0.09	&	 0.07	&	 0.02	&	 0.00	\\
$2^{6}$ 	&	 0.04	&	 0.04	&	 0.01	&	 0.00	\\
$2^{7}$  	&	 0.02	&	 0.02	&	 0.01	&	 0.00	\\
$2^{8}$ 	&	 0.01	&	 0.01	&	 0.00	&	 0.00	\\
$2^{9}$ 	&	 0.00	&	 0.00	&	 0.00	&	 0.00	\\
   \hline
\end{tabular}
    \end{subtable} 
        \caption{Selection frequencies for each proposal  and each coordinate calculated  on two  regions of the support, $A_1= \{X \in \RR^4: \; X_2 <8\}$ and $A_2=\{X \in \RR^4: \; X_2  \ge 8\}$. The entries in boldface show the difference in selection frequencies for some of the proposals in  the two regions of the support considered.}
    \label{tab:2dimmix1_pselect}
\end{table}

Tables \ref{tab:2dimmix1_pselect}  and \ref{tab:2dimmix1_pselect2} present the proportion of candidate selection and acceptance rates for each proposal.  We compare the proportion of proposals selected in  the regions  \(A_1=\{X_{n,2} < 8\} \) and \(A_2=\{X_{n,2} \geq 8\} \). While these regions are defined based on knowing the target exactly, they do not enter in any way in the design of the CMTM and are used here only to verify that the sampler indeed automatically adapts to local characteristics of the target.  We can see that the CMTM favours  proposal distributions with smaller \(\sigma_{k,j}\)'s when updating the third coordinate in the region $A_2$. This  is appropriate given that in that region larger moves for the third coordinate will tend to be rejected.  This pattern does not hold for the first two coordinates  for which larger moves are appropriate throughout the sample space. This is in line with what is expected since the target variances ($=6.25$) are the same in both directions in that region and confirms  that  the CMTM algorithm tends to choose the `better' proposal distribution out of the available choices provided at each iteration. 

\subsection{Comparison with a Mixture Transition Kernel}

An astute reader may wonder about a different strategy for using the  different proposals that one may have at one's disposal. Maybe  the most natural alternative is a random mixture of the component-wise Metropolis-Hastings (CMH) algorithms. The set of proposal distributions used in both algorithms is the same and we assign equal weights for the proposal distributions in the mixture.  The mixture CMH  kernel  selects each proposal at random with equal probability, but since a single proposal is produced each time a coordinate is updated, it is different than a CMTM algorithm with equal weights $w_j$.

However, this comparison will help us determine whether  adjusting the selection probabilities of each proposal distribution is an improvement over  equal probability selection. Our target distribution is the 4-dimensional mixture of two normals introduced in Section \ref {sec:optalpha}. We use $m=20$ and the same proposal scales discussed in the previous section. In Tables \ref{tab: 4dimmixnorm_acrate} and \ref{tab: mtm_4dimmixnorm_acrate} we present the acceptance rates for each coordinate and each proposal for the two samplers.   The results in Table \ref{tab: 4dimmixnorm_acrate} suggest that proposal distributions with small variances have their proposals, if selected, accepted with with high frequency. In the case of mixture of CMH this also means that  if we were to guide our selection of proposals based on  acceptance rates, we would  favour  small jumps.  The selection step in the CMTM seems to balance out a lot more the acceptance frequencies for the proposals used.  The even acceptance frequencies mean that they are not very informative about which variances are to be used in each coordinate.

To compare the efficiency of  the two algorithms, we report in Table \ref{tab: 4dimmixnorm_perf}  the ASJ and ACT calculated from 100 replicated runs as well as  the CPU time. We note that the average squared jumping distance significantly improves with the CMTM compared to the  mixture CMH. We can also see that  for all the chain's coordinates the ACT is an order of magnitude smaller for the CMTM than  the  mixture CMH.   When programming the examples in this paper we were able to take advantage of  the  R software's efficient handling of vector operations. This explain the small difference in CPU time even as CMTM requires   $m$ times more evaluations of the target than  the mixture CMH. 

\begin{table}[H]
\setlength{\tabcolsep}{5pt}
      \begin{subtable}{.48\textwidth}
    \caption{Mixture of CMH}
\label{tab: mwg_4dimmixnorm_acrate}
\centering
\begin{tabular}{rrrrr}
  \hline
  &\multicolumn{4}{c}{Coordinate}\\
\hhline{~----}
\(\sigma_{k,j}\) & coord1 & coord2 & coord3 & coord4 \\ 
  \hline
$2^{-10}$ 	&	 1.00	&	 1.00	&	 1.00	&	 1.00	\\
$2^{-9}$ 	&	 1.00	&	 1.00	&	 1.00	&	 0.99	\\
$2^{-8}$	&	 1.00	&	 1.00	&	 1.00	&	 0.98	\\
$2^{-7}$ 	&	 0.99	&	 1.00	&	 1.00	&	 0.98	\\
$2^{-6}$  	&	 1.00	&	 1.00	&	 0.99	&	 0.93	\\
$2^{-5}$ 	&	 0.99	&	 1.00	&	 1.00	&	 0.90	\\
$2^{-4}$ 	&	 0.99	&	 0.99	&	 0.98	&	 0.78	\\
$2^{-3}$  	&	 0.99	&	 0.97	&	 0.96	&	 0.65	\\
$2^{-2}$  	&	 0.97	&	 0.95	&	 0.97	&	 0.39	\\
$2^{-1}$ 	&	 0.91	&	 0.94	&	 0.88	&	 0.23	\\
$2^{0}$ 	&	 0.88	&	 0.87	&	 0.77	&	 0.11	\\
$2^{1}$ 	&	 0.76	&	 0.76	&	 0.63	&	 0.06	\\
$2^{2}$ 	&	 0.58	&	 0.58	&	 0.43	&	 0.04	\\
$2^{3}$ 	&	 0.39	&	 0.36	&	 0.26	&	 0.01	\\
$2^{4}$ 	&	 0.21	&	 0.21	&	 0.19	&	 0.01	\\
$2^{5}$ 	&	 0.11	&	 0.12	&	 0.11	&	 0.00	\\
$2^{6}$ 	&	 0.05	&	 0.05	&	 0.04	&	 0.00	\\
$2^{7}$  	&	 0.02	&	 0.04	&	 0.02	&	 0.00	\\
$2^{8}$ 	&	 0.02	&	 0.00	&	 0.01	&	 0.00	\\
$2^{9}$ 	&	 0.01	&	 0.01	&	 0.00	&	 0.00	\\
   \hline
\end{tabular}
    \end{subtable}%
    \hspace{0.04\textwidth}
    \begin{subtable}{.48\textwidth}
\caption{CMTM}
\label{tab: mtm_4dimmixnorm_acrate}
\centering
\begin{tabular}{rrrrr}
  \hline
  &\multicolumn{4}{c}{Coordinate}\\
\hhline{~----}
\(\sigma_{k,j}\) & coord1 & coord2 & coord3 & coord4 \\ 
  \hline
$2^{-10}$ 	&	  NaN	&	  NaN	&	  NaN	&	  NaN	\\
$2^{-9}$ 	&	  NaN	&	  NaN	&	  NaN	&	  NaN	\\
$2^{-8}$	&	  NaN	&	  NaN	&	  NaN	&	  NaN	\\
$2^{-7}$ 	&	  NaN	&	  NaN	&	  NaN	&	 0.17	\\
$2^{-6}$  	&	  NaN	&	  NaN	&	  NaN	&	 0.52	\\
$2^{-5}$ 	&	  NaN	&	  NaN	&	 1.00	&	 0.44	\\
$2^{-4}$ 	&	 0.50	&	  NaN	&	 0.50	&	 0.52	\\
$2^{-3}$  	&	 0.00	&	 0.00	&	 0.42	&	 0.50	\\
$2^{-2}$  	&	 0.17	&	 0.43	&	 0.53	&	 0.47	\\
$2^{-1}$ 	&	 0.49	&	 0.38	&	 0.58	&	 0.47	\\
$2^{0}$ 	&	 0.54	&	 0.45	&	 0.49	&	 0.44	\\
$2^{1}$ 	&	 0.57	&	 0.52	&	 0.52	&	 0.45	\\
$2^{2}$ 	&	 0.51	&	 0.49	&	 0.49	&	 0.37	\\
$2^{3}$ 	&	 0.48	&	 0.45	&	 0.47	&	 0.41	\\
$2^{4}$ 	&	 0.46	&	 0.45	&	 0.48	&	 0.33	\\
$2^{5}$ 	&	 0.41	&	 0.48	&	 0.48	&	 0.33	\\
$2^{6}$ 	&	 0.40	&	 0.35	&	 0.50	&	 0.43	\\
$2^{7}$  	&	 0.45	&	 0.31	&	 0.45	&	 0.38	\\
$2^{8}$ 	&	 0.47	&	 0.24	&	 0.35	&	 0.00	\\
$2^{9}$ 	&	 0.33	&	 0.45	&	 0.61	&	  NaN	\\
   \hline
\end{tabular}
    \end{subtable} 
      \caption{Post-selection acceptance frequencies. The NA's in the table are due to the fact that some proposals are never selected  for some of the coordinates . }
    \label{tab: 4dimmixnorm_acrate}
\end{table}

\begin{table}[H]
\setlength{\tabcolsep}{5pt}
     \begin{subtable}{.48\textwidth}
    \caption{Mixture of CMH}
\label{tab: mwg_4dimmixnorm_perf}
\centering
\begin{tabular}{rrrrrr}
  \hline
& Min. & Median & Mean & Max.  \\ 
  \hline
  cputime(s)& 4.47 & 4.56 & 4.57 & 4.97 \\
sq. jump& 0.467 & 0.619  &0.622 & 0.784   \\
\hline
\hline
&coord1 & coord2 & coord3 & coord4\\
\hline
ACT& 464.21 & 460.41 & 28.07 & 26.70 \\ 
   \hline
   \end{tabular}
    \end{subtable}%
    \hspace{0.04\textwidth}
    \begin{subtable}{.48\textwidth}
\caption{CMTM}
\label{tab: mtm_4dimmixnorm_perf}
\centering
\begin{tabular}{rrrrrr}
  \hline
 & Min. & Median & Mean & Max.  \\ 
  \hline
 cputime(s) & 10.25 & 10.41 & 10.43 & 11.22 \\
sq. jump & 6.20 & 6.62 &  6.62 & 7.07 \\ 
\hline
\hline
& coord1 & coord2 & coord3 & coord4\\
\hline
ACT& 41.96 & 41.25 & 1.64 & 1.64\\ 
   \hline
        \end{tabular}
    \end{subtable} 
     \caption{Comparison of performance indicators that were computed from 100 independently replicated runs. The tables contain statistics about the execution time for a complete run (cputime), the average squared jump distance and the ACT.}
    \label{tab: 4dimmixnorm_perf}
\end{table}

\subsection{The Adaptive CMTM Algorithm (ACMTM)}
\label{sec:algomtm}

Given its propensity to choose the best candidate put forward by the proposal distributions, it is reasonable to infer that CMTM's performance will be roughly aligned with the most suitable proposal for the  region where the chain current state lies. The other side of the coin is that  a whole set of bad proposals will compromise the efficiency of the CMTM algorithm. Therefore, we focus our efforts in developing an adaptive CMTM (AMCTM) design that aims to minimize, possibly annihilate, the chance of having at our disposal only poorly calibrated proposal distributions in any region of the space.

The adaptation  strategy is centered on finding well-calibrated values for the set $S_{k}=\{\sigma_{k,j}: \; 1\le j \le m\}$ for every coordinate $1\le k \le d$.  Note that $S_{k}$ varies across coordinates.

Consider an arbitrarily fixed coordinate $k$ and suppose we label the
$m$ proposal distributions such that $\sigma_{k,1}<\sigma_{k,2}<\ldots
<\sigma_{k,m}$.  Changes in the kernel occur at fixed points in
the simulation process, called {\it adaption points}.  We want
our adaptive algorithm to adapt less and less as the simulation
proceeds, a condition known as {\it Diminishing Adaptation (DA)}
and long recognized as being useful for establishing the chain's
valid asymptotic behaviour \citep{roberts2007coupling}. However, the
adaption strategy proposed above may not diminish in the long run,
so we ensure the DA condition more directly by only adapting on $a$th
iteration (for $a \ge 1$) with probability $P_a=\max (0.99^{a-1},
\frac{1}{\sqrt{a}})$.  Since $P_a \to 0$, the DA condition is ensured.
On the other hand, we chose $P_a$ so that it is decreases slowly and
has high values at the beginning of the run when most adaptations will
take place. Furthermore, the Borel-Cantelli lemma guarantees that the
adaption will keep occurring for as long as we run the chain since
$\sum_{a=1}^{\infty} P_a =\infty$.

An adaption is required for the standard deviations $\sigma_{k,j}$  only if we notice that the candidates generated by the proposal distribution $T_j^{(k)}$ with the smallest scale, $\sigma_{k,1}$, or the largest one, $\sigma_{k,m}$, are under- or over-selected.   For instance, suppose  that in an inter-adaptation time interval  the candidates generated using $\sigma_{k,1}$ are selected more than $100\times{2 \over m}$\% or less that $100\times{1 \over 2m}$\% of the time.  If we denote $q_{j}$ the frequency of selecting the candidate generated using $\sigma_{k,j}$ we have $m \max q_j \ge \sum_j q_{j}=1\ge m \min q_{j}$. Thus, the  thresholds represent, respectively, more than double  the  selection percentage for the least selected proposal and  less than half of the selection percentage for the most popular proposal.  A high selection percentage for $\sigma_{k,1}$ suggests that the chain tends to favour, when updating the $k$th coordinate, proposals with smaller scale so the ACMTM design requires to: 1)  halve the value of $\sigma_{k,1}$;  2) recalculate the intermediate values, $\sigma_{k,2},\ldots, \sigma_{k,m-1}$  to be equidistant between $\sigma_{k,1}$ and $\sigma_{k,m}$ on the log-scale.  A low selection percentage for $\sigma_{k,1}$ will ensure that the lowest scale is doubled up followed by step 2).  

Similarly, if the largest 
element in $S_{k}$, $\sigma_{k,m}$,  produces proposals with selection percentages above or below the thresholds mentioned above, we will   double or halve  $\sigma_{k,m}$, respectively. Each modification is followed by redistribution of the intermediate scales.

 If neither the smallest nor the largest elements in $S_{k}$ produce proposals that are   outside the boundaries set by the two thresholds, we wait until the algorithm reaches the next `adaption point' and recalculate  the proportion of each proposal  candidate being selected during the last inter-adaption time interval. The pseudo-code for the ACMTM is presented in Algorithm \ref{pc-acmtm}.

\begin{algorithm}
   \caption{Adaptive CMTM}
    \begin{algorithmic}
     \State Given:
            \begin{itemize}
             \item $M$ - number of MCMC iterations
             \item $m$ - number of proposals
             \item $d$ - number of coordinates
             \item $\{\sigma_{k,j}:1\leq k \leq d, 1\leq j \leq m \}$ - initial proposals
             \end{itemize}
    \State Set initial values:
           \begin{itemize}
            \item $\beta=100$ - the number of iterations between attempting an adaptation
            \item $P_a=1$ - probability of adapting at each attempt
            \end{itemize}
    \For{$t = 1$ to $M$}
       \If{ $t = 0$ $\mod \beta$ }
       \State $a=t/\beta$
         \State $u \sim \ru[0,1]$
          \If{ $u\leq P_a$ }
            \State Let $\sigma_{k,j} \le \ldots \le \sigma_{k,m}$ be the scales used and $\{S_{k,j}: \; 1\le k \le d, 1\le j \le m \}$ be the selection rates computed since the previous adaptation till now. Then
            \For{ $k=1$ to $d$ }
              \If{ $S_{k,m} > 2/m$ }
                 \State $\sigma_{k,m}=2 \sigma_{k,m}$
                 \State Make $\{\sigma_{k,j}\}$ equidistant on log base 2 scale
              \ElsIf { $(S_{k,m} < 1/(2m)) \land (\sigma_{k,1}<\sigma_{k,m}/2)}$
                 \State $\sigma_{k,m}=\sigma_{k,m}/2$
                 \State Make $\{\sigma_{k,j}\}$ equidistant on log base 2 scale
              \EndIf

              \If{ $S_{k,1} > 2/m$ }
                 \State $\sigma_{k,1}= \sigma_{k,1}/2$
                 \State Make $\{\sigma_{k,j}\}$ equidistant on log base 2 scale
              \ElsIf { $(S_{k,1} < 1/(2m)) \land (2 \sigma_{k,1}<\sigma_{k,m})}$
                 \State $\sigma_{k,1}=2 \sigma_{k,1}$
                 \State Make $\{\sigma_{k,j}\}$ equidistant on log base 2 scale
              \EndIf
            \EndFor
          \EndIf
        \State $P_a=\max (0.99^{a-1}, \frac{1}{\sqrt{a}})$
       \EndIf
      \State Perform CMTM move
    \EndFor
\end{algorithmic}
\label{pc-acmtm}
\end{algorithm}  

Finally, we make two minor technical modifications to our
ACMTM algorithm, to ensure the Containment condition of
\cite{roberts2007coupling}, and thus allow us to prove the convergence
of our algorithm in Section \ref{sec: convergence} below.  Namely:

\medskip \noindent \bf (A1) \rm
We choose a (very large) non-empty compact subset \(K \subset \mathcal{X}
\), and force $X_n \in K$ for all $n$.  Specifically, we
reject all proposals \(Y_{n+1} \not \in K\) (but if \(Y_{n+1}
\in K\), then we still accept/reject \(Y_{n+1}\) by the usual rule
for the CMTM algorithm described in Section \ref{sec:algo}).
Correspondingly, the initial value $X_0$ should be chosen in $K$.

\medskip \noindent \bf (A2) \rm
We choose a (very large) constant \(L > 0\) and a (very small) constant
\(\epsilon>0\), and force the proposal scalings
\(\sigma_{k,j}\) to always be in $[\epsilon,L]$.  Specifically,
if \(\sigma_{n,k,j}\) is the value of \(\sigma_{k,j}\) used at the
$n$-th iteration in our adaptive CMTM algorithm, then if $\sigma_{n,k,j}$
would be greater than L, we instead set $\sigma_{n,k,j}= L$, while
if $\sigma_{n,k,j}$ would be less than $\epsilon$, we instead set
$\sigma_{n,k,j}= \epsilon$. Correspondingly, the initial values
$\sigma_{0,k,j}$ should all be chosen in $[\epsilon,L]$.

\medskip
\begin{remark}
Our adaptive algorithm keeps the number of different
proposals at each iteration fixed at some constant $m$.  We have also
experimented with allowing the value $m$ itself to be updated adaptively.
This works fairly well, but does not appear to
offer any clear improvement over keeping $m$ constant, so we do not
pursue it further herein.
However, our theoretical justification also covers this case as long
as the possible $m$ values are bounded; see the remark following the
proof of Theorem~\ref{convthm} below.
\end{remark}

\subsection{To Adapt or Not To Adapt?}

We compare the ACMTM algorithm with the CMTM algorithm without adaption to see if the adaption indeed improves the efficiency of the algorithm. We use the 4-dimensional mixture of two normal distributions from Section \ref{sec:optalpha} as our target distribution. The \(\sigma_{k,j}\)'s for the non-adaptive algorithm are those given in Section 3.1  and they are also the starting $\sigma_{k,j}$'s for the adaptive algorithm. Evidently the final values are the same as the initial ones for the non-adaptive version of the sampler. In Table \ref{tab: acmtm_4dimmixnorm_sigmaj} we report  the final  values of the $\sigma_{k,j}$'s  obtained after the last adaption in one random  run of ACMTM.   For this particular run, the last adaption occurred right after 1800 iterations out of 10000 iterations in total. We notice that the scales chosen vary from component to component. For instance, the fourth component of the chain has a smaller marginal variance so the adaption  will favour smaller scales.  Similarly, the third  component requires both large and small proposal scales and we can see that reflected in the range of values for  $\{\sigma_{3,j}; \; 1\le j \le m\}$ which is different than for the first two components.

The comparison in terms of ASJ and ACT  is based on 100 independent replicates. The results shown in Table \ref{tab: 4dimmixnorm_perf_acmtm} indeed confirm the benefits of adaptation, as both ASJ and ACT are in agreement regarding the superiority of ACMTM over CMTM.

    \begin{table}
\centering
\begin{tabular}{r|rrrr}
  \hline
& coord1 & coord2 & coord3 & coord4 \\ 
  \hline
prop1	&	 4.0000	&	 4.0000	&	 2.0000	&	 0.1250	\\
prop2	&	 4.1486	&	 4.1486	&	 2.0743	&	 0.1345	\\
prop3	&	 4.3028	&	 4.3028	&	 2.1514	&	 0.1446	\\
prop4	&	 4.4626	&	 4.4626	&	 2.2313	&	 0.1556	\\
prop5	&	 4.6284	&	 4.6284	&	 2.3142	&	 0.1674	\\
prop6	&	 4.8004	&	 4.8004	&	 2.4002	&	 0.1800	\\
prop7	&	 4.9788	&	 4.9788	&	 2.4894	&	 0.1937	\\
prop8	&	 5.1638	&	 5.1638	&	 2.5819	&	 0.2083	\\
prop9	&	 5.3556	&	 5.3556	&	 2.6778	&	 0.2241	\\
prop10	&	 5.5546	&	 5.5546	&	 2.7773	&	 0.2410	\\
prop11	&	 5.7610	&	 5.7610	&	 2.8805	&	 0.2593	\\
prop12	&	 5.9750	&	 5.9750	&	 2.9875	&	 0.2789	\\
prop13	&	 6.1970	&	 6.1970	&	 3.0985	&	 0.3000	\\
prop14	&	 6.4273	&	 6.4273	&	 3.2136	&	 0.3227	\\
prop15	&	 6.6661	&	 6.6661	&	 3.3330	&	 0.3472	\\
prop16	&	 6.9138	&	 6.9138	&	 3.4569	&	 0.3734	\\
prop17	&	 7.1707	&	 7.1707	&	 3.5853	&	 0.4017	\\
prop18	&	 7.4371	&	 7.4371	&	 3.7185	&	 0.4321	\\
prop19	&	 7.7134	&	 7.7134	&	 3.8567	&	 0.4648	\\
prop20	&	 8.0000	&	 8.0000	&	 4.0000	&	 0.5000	\\
   \hline
        \end{tabular}
            \caption{Adaptive CMTM: Final $\sigma_{k,j}$ for each coordinate and each proposal used.}
\label{tab: acmtm_4dimmixnorm_sigmaj}

\end{table}

\begin{table}[H]
\setlength{\tabcolsep}{5pt}
    \begin{subtable}{.48\textwidth}
    \caption{Non-adaptive CMTM}
\label{tab: nonacmtm_4dimmixnorm_perf}
\centering
\begin{tabular}{rrrrrr}
  \hline
& Min. & Median & Mean & Max.  \\ 
  \hline
cputime(s)& 10.25 & 10.41 & 10.43 & 11.22\\
sq. jump &6.20 & 6.62 & 6.62 & 7.07 \\ 
\hline
\hline
&coord1 & coord2 & coord3 & coord4 \\
\hline
ACT & 41.96 & 41.25 &  1.64 &  1.64 \\
   \hline
   \end{tabular}
    \end{subtable}%
    \hspace{0.04\textwidth}
    \begin{subtable}{.48\textwidth}
\caption{Adaptive CMTM}
\label{tab: acmtm_4dimmixnorm_perf}
\centering
\begin{tabular}{rrrrrr}
  \hline
 & Min. & Median & Mean & Max.\\ 
  \hline
cputime(s)& 10.42 & 10.57 & 10.65 & 13.14 \\ 
sq. jump & 8.88 & 10.15 & 10.04 & 10.76 \\ 
\hline
\hline
&coord1 & coord2 & coord3 & coord4 \\
\hline
ACT &22.55 &22.46  &1.43  &1.00\\
   \hline
        \end{tabular}
    \end{subtable} 
     \caption{Comparison of performance indicators that were computed from 100 independently replicated runs. The tables contain statistics about the execution time for a complete run (cputime), the average squared jump distance and the ACT.}
         \label{tab: 4dimmixnorm_perf_acmtm}

\end{table}

\begin{table}[H]
   \begin{center}
\begin{tabular}{r|rrrrr}
  \hline
 & coord1 & coord2 & coord3 & coord4 \\  
  \hline
prop1	&	 0.04	&	 0.05	&	 0.05	&	 0.04	\\
prop2	&	 0.05	&	 0.05	&	 0.05	&	 0.05	\\
prop3	&	 0.05	&	 0.05	&	 0.05	&	 0.05	\\
prop4	&	 0.05	&	 0.04	&	 0.05	&	 0.05	\\
prop5	&	 0.05	&	 0.05	&	 0.05	&	 0.05	\\
prop6	&	 0.05	&	 0.05	&	 0.05	&	 0.05	\\
prop7	&	 0.05	&	 0.05	&	 0.05	&	 0.04	\\
prop8	&	 0.05	&	 0.05	&	 0.05	&	 0.05	\\
prop9	&	 0.05	&	 0.05	&	 0.05	&	 0.06	\\
prop10	&	 0.05	&	 0.05	&	 0.05	&	 0.05	\\
prop11	&	 0.05	&	 0.05	&	 0.04	&	 0.05	\\
prop12	&	 0.05	&	 0.05	&	 0.05	&	 0.05	\\
prop13	&	 0.05	&	 0.05	&	 0.05	&	 0.06	\\
prop14	&	 0.05	&	 0.05	&	 0.05	&	 0.05	\\
prop15	&	 0.05	&	 0.05	&	 0.05	&	 0.05	\\
prop16	&	 0.05	&	 0.05	&	 0.05	&	 0.05	\\
prop17	&	 0.05	&	 0.05	&	 0.05	&	 0.05	\\
prop18	&	 0.05	&	 0.05	&	 0.05	&	 0.04	\\
prop19	&	 0.05	&	 0.05	&	 0.05	&	 0.04	\\
prop20	&	 0.05	&	 0.05	&	 0.05	&	 0.04	\\
   \hline
 \end{tabular}
\caption{Adaptive CMTM: Rate of selection for each proposal and each coordinate.}
 \label{tab: acmtm_4dimmixnorm_pselect}
\end{center}
\end{table}

When  comparing  the rate of selection for each proposal, as reported  in Tables \ref{tab: mtm_4dimmixnorm_pselect} and \ref{tab: acmtm_4dimmixnorm_pselect},   we observe the almost constant selection probabilities for the ACMTM which suggests that all the proposal scales selected are important in the simulation. Finally, we also compare the acceptance frequencies for the selected proposals for CMTM and ACMTM, as shown in Tables  \ref{tab: mtm_4dimmixnorm_acrate} and  \ref{tab: acmtm_4dimmixnorm_acrate}, respectively. The adaptive version of the algorithm clearly  makes better use of the generated proposals. There are no longer any NA's, i.e. all  proposals are occasionally accepted in each coordinate. In fact, the acceptance rates for ACMTM are quite even, again suggesting a balanced use of the proposal distributions. In almost every instance the acceptance rates have gone up compared to the CMTM values in Table  \ref{tab: mtm_4dimmixnorm_acrate}.

    \begin{table}[H]
\centering
\begin{tabular}{r|rrrrrr}
  \hline
 & coord1 & coord2 & coord3 & coord4 \\  
  \hline
prop1	&	 0.58	&	 0.66	&	 0.49	&	 0.60	\\
prop2	&	 0.57	&	 0.58	&	 0.58	&	 0.60	\\
prop3	&	 0.60	&	 0.65	&	 0.62	&	 0.60	\\
prop4	&	 0.63	&	 0.55	&	 0.59	&	 0.60	\\
prop5	&	 0.61	&	 0.59	&	 0.58	&	 0.65	\\
prop6	&	 0.65	&	 0.53	&	 0.60	&	 0.60	\\
prop7	&	 0.59	&	 0.59	&	 0.60	&	 0.62	\\
prop8	&	 0.64	&	 0.65	&	 0.58	&	 0.60	\\
prop9	&	 0.58	&	 0.57	&	 0.59	&	 0.60	\\
prop10	&	 0.57	&	 0.61	&	 0.60	&	 0.56	\\
prop11	&	 0.61	&	 0.66	&	 0.59	&	 0.54	\\
prop12	&	 0.57	&	 0.54	&	 0.62	&	 0.66	\\
prop13	&	 0.53	&	 0.54	&	 0.66	&	 0.60	\\
prop14	&	 0.55	&	 0.58	&	 0.57	&	 0.61	\\
prop15	&	 0.61	&	 0.60	&	 0.58	&	 0.55	\\
prop16	&	 0.58	&	 0.61	&	 0.60	&	 0.60	\\
prop17	&	 0.54	&	 0.65	&	 0.61	&	 0.57	\\
prop18	&	 0.58	&	 0.61	&	 0.58	&	 0.53	\\
prop19	&	 0.56	&	 0.56	&	 0.62	&	 0.60	\\
prop20	&	 0.61	&	 0.63	&	 0.66	&	 0.59	\\
   \hline
        \end{tabular}
        \caption{ACMTM:  Post-selection acceptance probabilities  for each proposal.}
        \label{tab: acmtm_4dimmixnorm_acrate}
\end{table}

\subsection{Convergence of Adaptive CMTM}
\label{sec: convergence}

We prove below the convergence of the adaptive CMTM algorithm described in Section \ref{sec:algomtm}. As explained in Section \ref{sec:algomtm}, Diminishing Adaptation condition holds by the construction of the adaption mechanism. 

\begin{theorem}
\label{convthm}
Consider the adaptive CMTM algorithm in Section \ref{sec:algomtm} to sample from state space $\mathcal{X}$ that is an open subset of $\mathbf{R}^d$ for some $d \in \mathbf{N}$. Let $\pi$ be a target probability distribution, which has a continuous positive density on $K$ with respect to the Lebesgue measure. Then, the adaptive CMTM algorithm converges to stationarity as in 
\begin{align}
\label{eq:ttlvard}
\lim_{n\to \infty} \sup_{A \in \mathcal{F}}
|\mathbf{P}(X_n \in A) - \pi (A) | = 0.
\end{align}
\end{theorem}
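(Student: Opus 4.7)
The plan is to invoke the standard convergence framework of Roberts and Rosenthal (2007) for adaptive MCMC, which states that if (i) each individual transition kernel $P_\gamma$ in the family has stationary distribution $\pi$, (ii) \emph{Diminishing Adaptation} holds, and (iii) \emph{Containment} holds, then the adaptive algorithm converges in total variation as in \eqref{eq:ttlvard}. Condition (i) is automatic: for each fixed parameter vector $\gamma = \{\sigma_{k,j}\}\in [\epsilon,L]^{md}$, the corresponding CMTM kernel is an instance of the multiple-try Metropolis scheme of Liu et al.\ applied component-wise, and the reject-proposals-outside-$K$ modification in (A1) is equivalent to running CMTM on the restricted target $\pi|_K$, which is still stationary for every such kernel. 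So the argument reduces to verifying Diminishing Adaptation and Containment.

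Diminishing Adaptation is essentially built into the design. Adaptation is attempted only at multiples of $\beta$, and even then occurs with probability $P_a = \max(0.99^{a-1}, 1/\sqrt{a}) \to 0$. When it does occur, the largest and smallest scales change at most by a factor of two, the intermediate scales are determined by the endpoints, and all scales are truncated to $[\epsilon,L]$. Because the CMTM transition density depends continuously on $\gamma$ and the parameter space $[\epsilon,L]^{md}$ is compact, the map $\gamma \mapsto P_\gamma$ is uniformly continuous in total variation. Hence $\sup_x \lVert P_{\Gamma_{n+1}}(x,\cdot) - P_{\Gamma_n}(x,\cdot)\rVert_{TV} \le C\,\mathbf{1}\{\text{adapt at }n\}$ for a fixed constant $C$, and this vanishes in probability as $n \to \infty$.

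Containment will follow from \emph{simultaneous uniform ergodicity}, namely a minorization
\[
P_\gamma(x,\cdot) \;\ge\; \delta\,\nu(\cdot), \qquad \text{for all } x \in K, \ \gamma \in [\epsilon,L]^{md},
\]
for some $\delta>0$ and probability measure $\nu$ supported on $K$. The ingredients are: (a)~$K$ is compact and $\pi$ is continuous and strictly positive on $K$, so $0 < \pi_{\min} \le \pi \le \pi_{\max} < \infty$ on $K$; (b)~each Gaussian proposal density $T_j^{(k)}(y|x)$ is bounded above and below by positive constants uniformly over $(x_k,y)$ in the compact projection of $K$ and $\sigma_{k,j} \in [\epsilon,L]$. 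To obtain the minorization for a full $d$-coordinate sweep, one restricts attention to events of the form ``for each coordinate $k$, the selected candidate $y_s^{(k)}$ lies in a fixed compact region $B_k \subset K$ bounded away from the current value,'' which gives uniformly positive probability of selection (the $\|y-x\|^\alpha$ factor is then bounded below), followed by acceptance with probability uniformly bounded below by $\pi_{\min}/\pi_{\max}$ times a ratio of bounded weight sums. Composing over the $d$ coordinates yields a uniformly positive lower bound on the density of $P_\gamma(x,\cdot)$ with respect to Lebesgue measure on a fixed product set inside $K$, producing the required minorization; this implies uniform ergodicity with a rate independent of $\gamma$, which trivially implies Containment.

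The main obstacle is the compound nature of the CMTM step: the selection weights contain the factor $\|y_j^{(k)}-x_k\|^\alpha$ which degenerates as $y_j^{(k)} \to x_k$, and both the selection and the final acceptance probabilities are ratios of sums of such weights. A naive pointwise lower bound would fail. The remedy, as sketched above, is to integrate out the auxiliary candidates over a bounded-away-from-$x$ region so that both the weight sums in the denominator and the relevant factor in the numerator are uniformly bounded below. The two truncations (A1) and (A2) are precisely what make every bound in this argument uniform in $x$ and $\gamma$; without them, either the proposal densities could vanish or $\pi$ could lack a uniform positive lower bound, and Containment would not follow from this direct route.
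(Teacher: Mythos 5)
Your high-level framework is the same as the paper's (the Roberts--Rosenthal decomposition into stationarity of each $P_\gamma$, Diminishing Adaptation, and Containment, with DA handled by the vanishing adaptation probability $P_a$). Where you genuinely diverge is Containment: the paper does not attempt a minorization. It instead observes that each fixed-$\gamma$ CMTM kernel is Harris ergodic, that $\Delta(x,\gamma,n):=\|P_\gamma^n(x,\cdot)-\pi(\cdot)\|$ is jointly continuous in $(x,\gamma)$, and that $K\times\Y$ is compact by (A1)--(A2), so Dini's theorem upgrades the pointwise convergence $\Delta(x,\gamma,n)\to 0$ to uniform convergence over $K\times\Y$, which bounds the convergence times $M_\epsilon(X_n,\Gamma_n)$ by a single finite constant. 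That route is softer (no explicit constants) but deliberately sidesteps the difficulty that your route runs into.

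The gap is in your one-step simultaneous minorization $P_\gamma(x,\cdot)\ge\delta\,\nu(\cdot)$ for all $x\in K$: as stated it is false for this kernel, and the remedy you sketch does not repair it. Since the selection weight is $w_j^{(k)}(y,x)=\pi(y\mid x_{[-k]})\,\|y-x_k\|^{\alpha}$, the absolutely continuous part of the one-coordinate update has density tending to $0$ as $y\to x_k$: the candidate located at $y$ carries weight of order $\|y-x_k\|^{\alpha}$ against the other $m-1$ candidates, whose total weight is almost surely positive, so it is selected with vanishing probability. Consequently the full-sweep density $p_\gamma(x,z)$ vanishes as $z_k\to x_k$ for any coordinate $k$. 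A minorization requires a \emph{fixed} probability measure $\nu$ with $p_\gamma(x,\cdot)\ge\delta$ times its density, $\nu$-a.e., \emph{simultaneously} for every $x\in K$; but for any $B\subset K$ with $\nu(B)>0$ and any $x\in B$, the thin slabs $\{z\in B:|z_k-x_k|<\eta\}$ satisfy $P_\gamma(x,\cdot)/\nu(\cdot)\to 0$ as $\eta\to 0$, so no $\delta>0$ works. Choosing $B_k$ ``bounded away from the current value'' makes $\nu$ depend on $x$, which is no longer a minorization. The repair within your strategy is to minorize the two-step kernel: $P_\gamma^2(x,A)\ge\int_A\bigl(\int_{C(x,z)}p_\gamma(x,y)\,p_\gamma(y,z)\,dy\bigr)\,dz$, where $C(x,z)$ is $K$ with thin slabs removed around the $2d$ hyperplanes $\{y_k=x_k\}$ and $\{y_k=z_k\}$; this set has Lebesgue measure bounded below uniformly, and on it both densities admit the uniform lower bounds your ingredients (a)--(b) and the conditioning on the auxiliary candidates provide. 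That yields $P_\gamma^2(x,\cdot)\ge\delta\,\mathrm{Leb}|_B(\cdot)$ uniformly in $x\in K$ and $\gamma\in\Y$, hence simultaneous uniform ergodicity and Containment; with this modification your argument closes and is in fact more quantitative than the paper's.
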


\begin{proof}

By \cite{roberts2007coupling}, the convergence of an adaptive
MCMC algorithm as in \eqref{eq:ttlvard} can be ensured by two
conditions Diminishing Adaptation and Containment. Our algorithm
satisfies Diminishing Adaptation (DA) as explained in Section
\ref{sec:algomtm}. So, it suffices to show that
our algorithm satisfies the Containment condition.

The Containment condition of \cite{roberts2007coupling}
(see also \cite{craiu2014stability, rosenthal2016ergodicity} states that
the process's convergence times are bounded in probability, i.e.\ that
$\{M_{\epsilon} (X_n, \Gamma_n)\}_{n=1}^\infty$
is bounded in probability,
where \(M_{\epsilon} (x, \gamma )
:= \inf \{n \geq 1 : \| P^{n}_{\gamma} (x, \cdot) - \pi (\cdot) \|
\leq \epsilon \}\) for all \(\epsilon > 0\), and $P_\gamma^n$ is a
fixed $n$-step proposal kernel.  

We proceed similarly to the proof of Proposition~23 of
\cite{craiu2014stability}. By our assumption~(A1), the process $\{X_n\}$
is bounded in probability, in fact $\|X_n\| \le L$ for all~$n$.
To continue, we let
$\Y$ be the collection of all $d \times m$ matrices of
real numbers in \([\epsilon, L]\).  Then by our assumption (A2),
$\Y$ is compact.  Here each $\gamma\in\Y$ corresponds to
a particular choice of MTM proposals, where
$\gamma_{k,j}$ equals the scaling of the
$j$th proposal kernel for the $k${th} coordinate.
And, our adaption rule is such that choosing which
$\gamma\in\Y$ to use for each iteration \(n\) is determined
by the past and/or current information obtained from the chain.

Next, let $P_\gamma$ be the Markov kernel corresponding to
one full sequence of updates for all coordinates of the chain, in
sequence.  Then $P_\gamma$ is
Harris ergodic to $\pi$, since it is known that
any {\it non}-adaptive CMTM algorithm
must converge to \(\pi\)
(cf.\ \cite{liu2000multiple, casarin2013interacting}).
It follows that
$\lim_{n \to \infty} \Delta (x, \gamma, n):= \| P^{n}_{\gamma}
(x, \cdot) - \pi (\cdot) \| = 0$ for each $(x,\gamma)$,
where $\|\cdots\|$ is the usual total variation distance convergence metric.
Now, with our
algorithm as set up in Section \ref{sec:algomtm}, $\Delta (x, \gamma,
n)$ is a continuous function of $(x,\gamma)$: indeed, it is a
composition of single-coordinate MTM updates each of which is
continuous as in the proof of Corollary~11 of \cite{roberts2007coupling}.

To finish, we note (following \cite{rosenthal2016ergodicity})
that by Dini's Theorem, 
\begin{align*}
\lim_{n \to \infty}\sup_{x\in \mathcal{C}} \sup_{\gamma \in
\mathcal{Y}} \Delta (x, \gamma, n) = 0
\end{align*}
for any compact set \(\mathcal{C} \subset \mathcal{X}\). Hence,
for any \(\epsilon >0 \), there is \(D<\infty\) such that \(\sup_{x\in
\mathcal{C}} \sup_{\gamma \in \mathcal{Y}} \Delta (x, \gamma,
D) < \epsilon \). It follows that \(\sup_{x\in \mathcal{C}} \sup_{\gamma
\in \mathcal{Y}} M_{\epsilon} (x, \gamma ) \le D < \infty \).
In particular, choosing $C=K$ from our assumption (A1),
we know that
$P (X_n \not \in K)=0$ for all \(n\), so if \( D \ :=
\sup_{x\in \mathcal{K}} \sup_{\gamma \in \mathcal{Y}} M_{\epsilon}
(x, \gamma )\), then for any $\delta>0$,
\(P (M_{\epsilon} (X_n, \Gamma_n) > D ) = 0 \leq
\delta \) for all \(n\). In particular,
$\{M_{\epsilon} (X_n, \Gamma_n)\}_{n=1}^\infty$ is bounded in
probability.
Therefore, the Containment condition holds, thus finishing the proof.
\end{proof}



\medskip
\begin{remark}
Our theorem is still valid if
the number of proposals $m$ is allowed to change from iteration to iteration,
provided $m$ is forced to remain between
$1$ and some large finite upper bound $M$.
Indeed, in that case $\Y$ is a discrete union of $M$ different
collections of $d \times m$ matrices, and $\Delta(x,\gamma,n)$ is
continuous separately on each collection, and the rest of the proof can
then proceed without further change.
\end{remark}

\section{Applications}
\label{sec: appl}

In the following examples we compare the CMTM and AMCTM started with the same set of  $\sigma_{k,j}$. We also compare their performance with CMH and adaptive CMH. The  design of the latter is based on the theoretical results  of  \cite{gelman1996efficient} and \cite{roberts2001optimal} who found that the optimal acceptance rate for one-dimensional Metropolis algorithm is 0.44  and therefore adjusts the proposal variance to  get an acceptance rate close to this value  for each coordinate.

First we compare CMTM ( with different number of proposals $m$) with CMH, both with generic proposals.
For CMTM with  $m$   proposals we set $\sigma_{k,j}=2^{j-1-\lfloor m/2 \rfloor}$ for each coordinate $1\le j \le m$. The CMH's proposals are fixed at 1 for each coordinate.

In second comparisons we compare adaptive CMTM with different number of proposals  and adaptive CMH. The starting $\sigma$'s are identical to the ones used in their non-adaptive counterparts.

For all the examples we use  the effective sample size (ESS) and ESS/CPUtime (CPUtime is the time needed to complete the simulation)  to compare the efficiency of MCMC algorithms. The latter is particularly relevant for algorithm comparison since it is a way to quantify the resource allocation efficiency.  Since  ESS $=w/\tau$, where $w$ is the number of samples obtained from a Markov chain and $\tau$ is the ACT, one can see that ESS is equivalent to ACT. One may intuitively interpret ESS the number of iid samples from the target that would contain the same amount of information  about the target as the MCMC sample. The first half of the chains'  10000 realizations   is discarded and the remaining samples are used to calculate the ACT. The reported ESS is based on  averaging the  ACT over 50 independent runs.

\subsection{Variance Components Model}
\label{sec: vcm}

The Variance Components Model (VCM) is a typical hierarchical model, often used in Bayesian statistics community. Here, we use the data on batch to batch variation in dyestuff yields. The data were introduced in \cite{davies1967statistical} and later analyzed by \cite{box1973bayesian}. The Bayesian set-up of the Variance Components Model on dyestuff yields is also well-described in \cite{roberts2004general}. The data records yields on dyestuff of 5 samples, from each of 6 randomly chosen batches. The data is shown in Table~\ref{table:dyes}.
\begin{table}[H]
\centering
\begin{tabular}{r|rrrrr}
  \hline
Batch 1 &1545 &1440 &1440 &1520 &1580\\
Batch 2 &1540 &1555 &1490 &1560 &1495\\
Batch 3 &1595 &1550 &1605 &1510 &1560\\
Batch 4 &1445 &1440 &1595 &1465 &1545\\
Batch 5 &1595 &1630 &1515 &1635 &1625\\
Batch 6 &1520 &1455 &1450 &1480 &1445\\
  \hline
  \end{tabular}
  \caption{Dyestuff Batch Yield (in grams)}
  \label{table:dyes}
  \end{table}

Let $y_{ij}$ be the yield on the dyestuff batch, with $i$ indicating which batch it is from and $j$ indexing each individual sample from the batch. The Bayesian model is then constructed as:
\begin{align*}
y_{ij}|\theta_i, \sigma_e^2 \sim N(\theta_i, \sigma_e^2), \qquad i = 1,2,..., K, \quad j = 1,2,...,J
\end{align*}
where \(\theta_i|\mu,  \sigma_{\theta}^2 \sim N(\mu,
\sigma_{\theta}^2)\). \(\theta_i\)'s are
conditionally independent of each other given $\mu,  \sigma_{\theta}^2$. The priors for the $\sigma_{\theta}^2, \sigma_e^2$ and $\mu$ are: \(\sigma_{\theta}^2 \sim IG(a_1, b_1)\), \(\sigma_e^2 \sim IG(a_2, b_2)\) and \(\mu \sim N(\mu_0, \sigma_0^2)\). Thus, the posterior density function of this VCM model is 
\begin{align*}
&f(\sigma_{\theta}^2, \sigma_{e}^2, \mu, \theta_i | y_{ij}, a_1, a_2, b_1, b_2, \sigma_0^2 )
 \propto \\
&(\sigma_{\theta}^2)^{-(a_1+1)} e^{-b_1/\sigma_{\theta}^2} (\sigma_{e}^2)^{-(a_2+1)} e^{-b_2/\sigma_{e}^2} e^{-(\mu-\mu_0)^2/2\sigma_0^2}
\prod_{i=1}^K \frac{e^{(\theta_i-\mu)^2/2\sigma^2_{\theta}}}{\sigma_{\theta}} \prod_{i=1}^K \prod_{j=1}^J \frac{e^{(y_{ij}-\theta_i)^2/2\sigma^2_{e}}}{\sigma_{e}} 
\end{align*}

We set  the hyperparameters \(a_1 = a_2 = 300\) and \(b_1 = b_2 = 1000\),  making inverse gamma priors very concentrated. We also set \(\sigma_0^2 = 10^{10}\).

Figure \ref{fig:vcncomc_ess_adp} shows ESS and ESS/CPU (averaged over 50 runs) of the CMTM algorithms with and without adaption and of standard CMH and adaptive CMH algorithm. For both CMTM algorithms (with and without adaption), the starting proposals were generic for every coordinate as described above.

\begin{figure}[H]
\includegraphics[height=0.33\textheight]{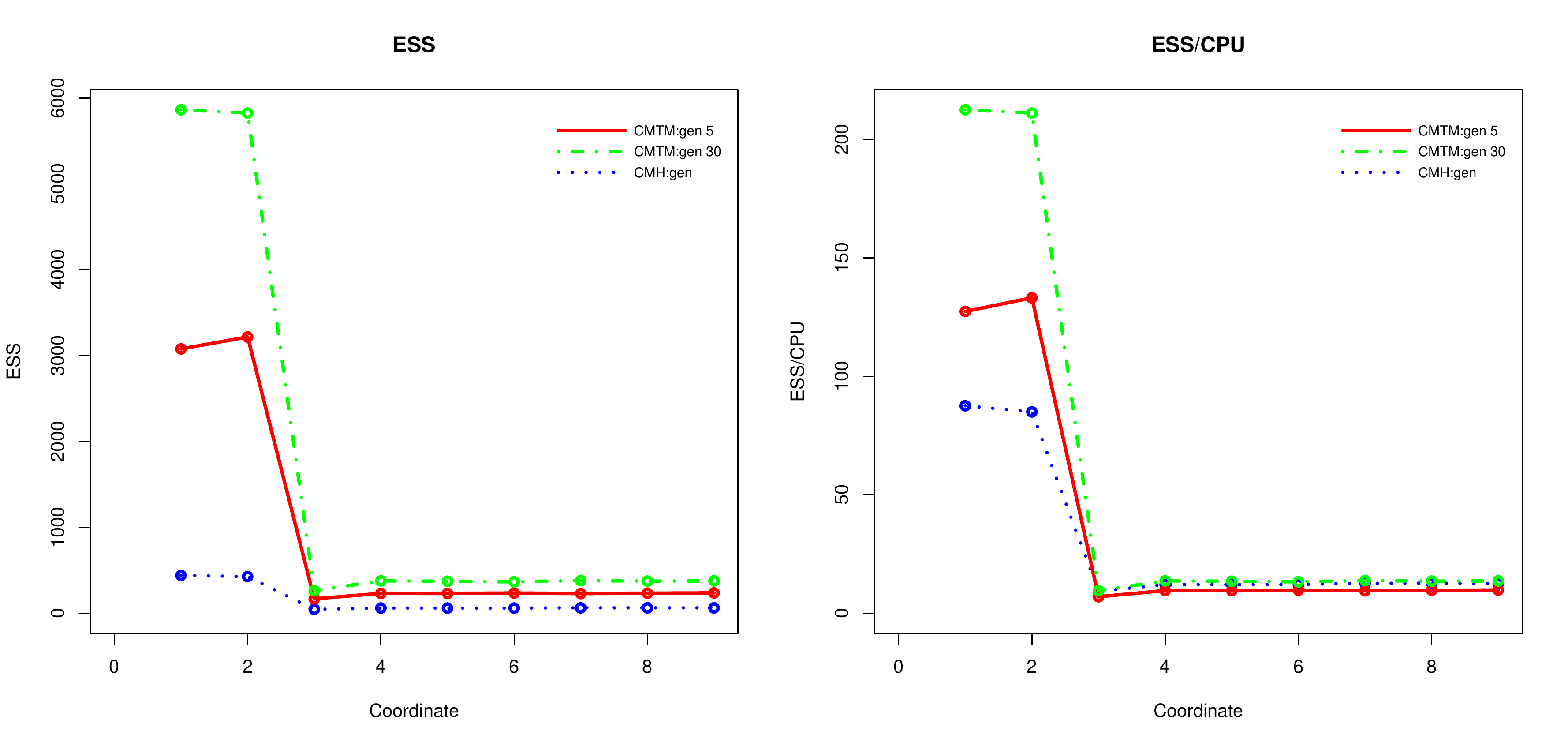}

\includegraphics[height=0.33\textheight]{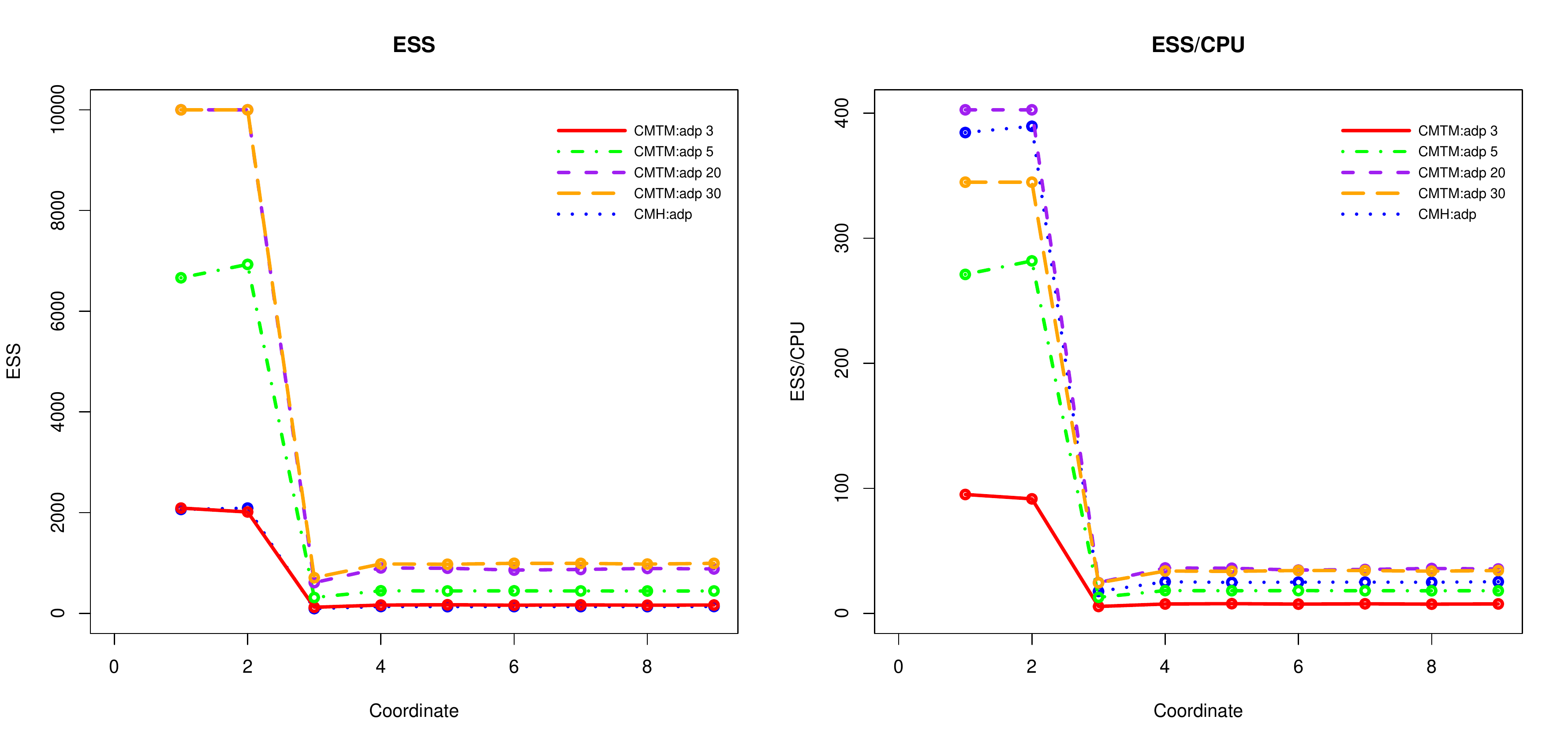}

\caption{Variance components model. {\bf Top Row:} Comparison between non-adaptive CMH and  CMTM with 5 and 30 generic proposals.  The red and green lines show the ESS (left panel) or ESS/CPU (right panel) for  the CMTM with 5 and 30 proposals, respectively, the blue line presents the same for CMH. {\bf Bottom Row}: Comparison between ACMTM with 3, 5, 20 or 30 proposals and the adaptive CMH. The red, green, purple and orange lines show ESS (left panel) or ESS/CPU (right panel) for ACMTM with 3, 5, 20 and 30 proposals, respectively, and the blue shows the performance for the adaptive CMH.}
\label{fig:vcncomc_ess_adp}
\end{figure}

The plots for non-adaptive samplers clearly show that CMTM with 30 proposals is the most efficient in ESS and even when CPU time is taken into account
it still performs better than CMH. Similar results is evident for adaptive samplers. Clearly adaptive CMTM with 20 or 30 proposal have much
better ESS than adaptive CMH. When CPU time is considered than adaptive CMTM with 20 proposals is the most efficient.

\subsection{``Banana-shaped'' Distribution}
\label{sec: banana}

The ``Banana-shaped'' distribution was originally presented in \cite{haario1999adaptive} as an irregularly-shaped target that may call for different proposal distributions for the different parts of the state space. 

The target density function of the ``banana-shaped'' distribution is constructed as $f_B = f \circ \phi_B$, where $f$ is the density of $d-$dimensional multivariate normal distribution $N(\mathbf{0}, \text{diag}(100,1,1,\ldots,1))$ and $\phi_B (\mathbf{x}) = (x_1, x_2 + Bx_1^2 - 100B, x_3, \ldots, x_d)$. $B>0$ is the nonlinearity parameter and the non-linearity or ``bananacity'' of the target distribution increases with $B$. The target density function is 

\begin{align*}
 f_B(x_1, x_2,\ldots,x_d) \propto \exp  [ - x_1^2/200 - \frac{1}{2} (x_2 + B x_1^2 - 100 B)^2
  - \frac{1}{2}(x_3^2 + x_4^2 + \ldots + x_d^2)  ]. 
\end{align*}

We set \(B = 0.01\) and \(d = 10\),                       the results are shown in Figure \ref{fig:banana_ess_adp} (averaged over 50 runs starting with generic proposals).  
\begin{figure}[H]
\includegraphics[height=0.33\textheight]{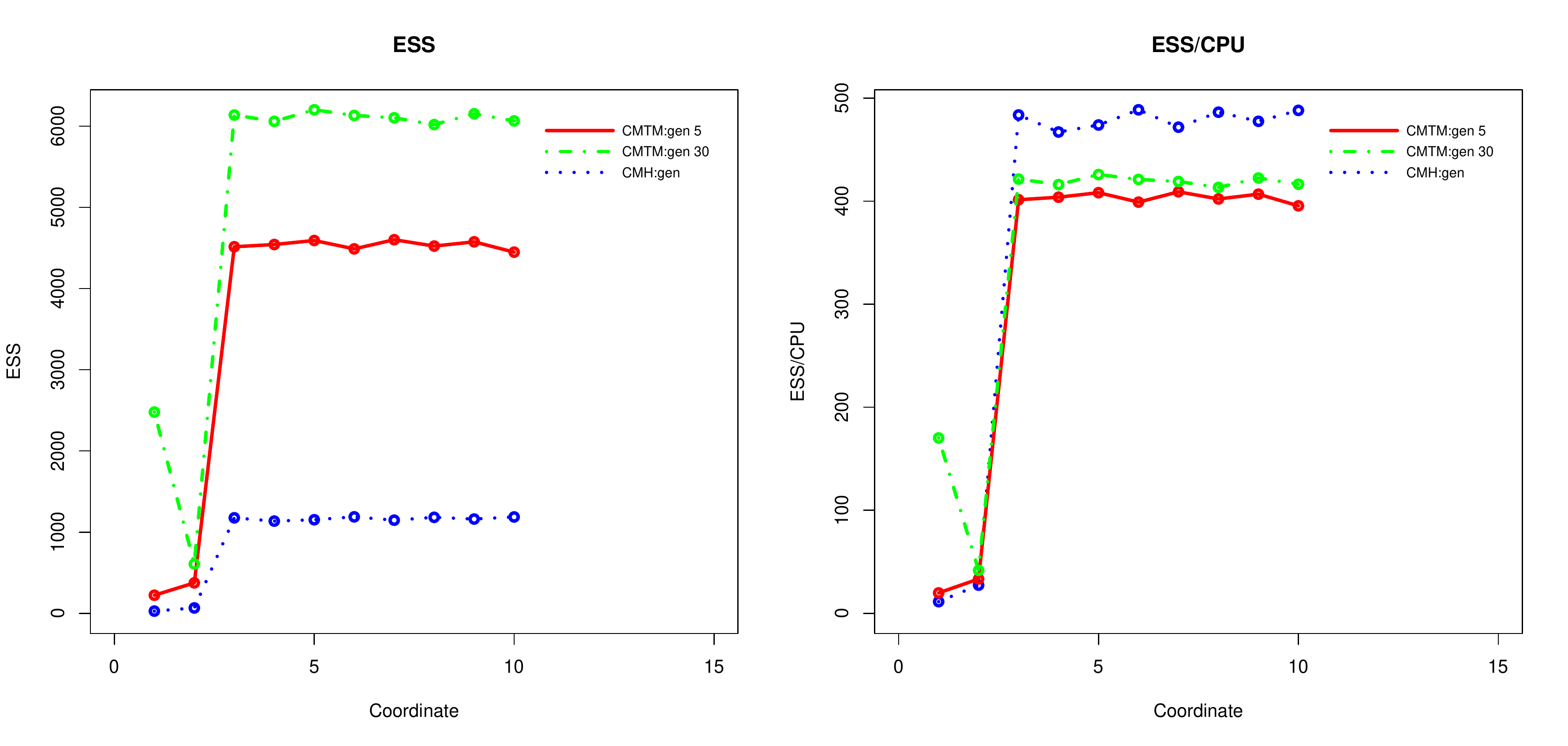}
\includegraphics[height=0.33\textheight]{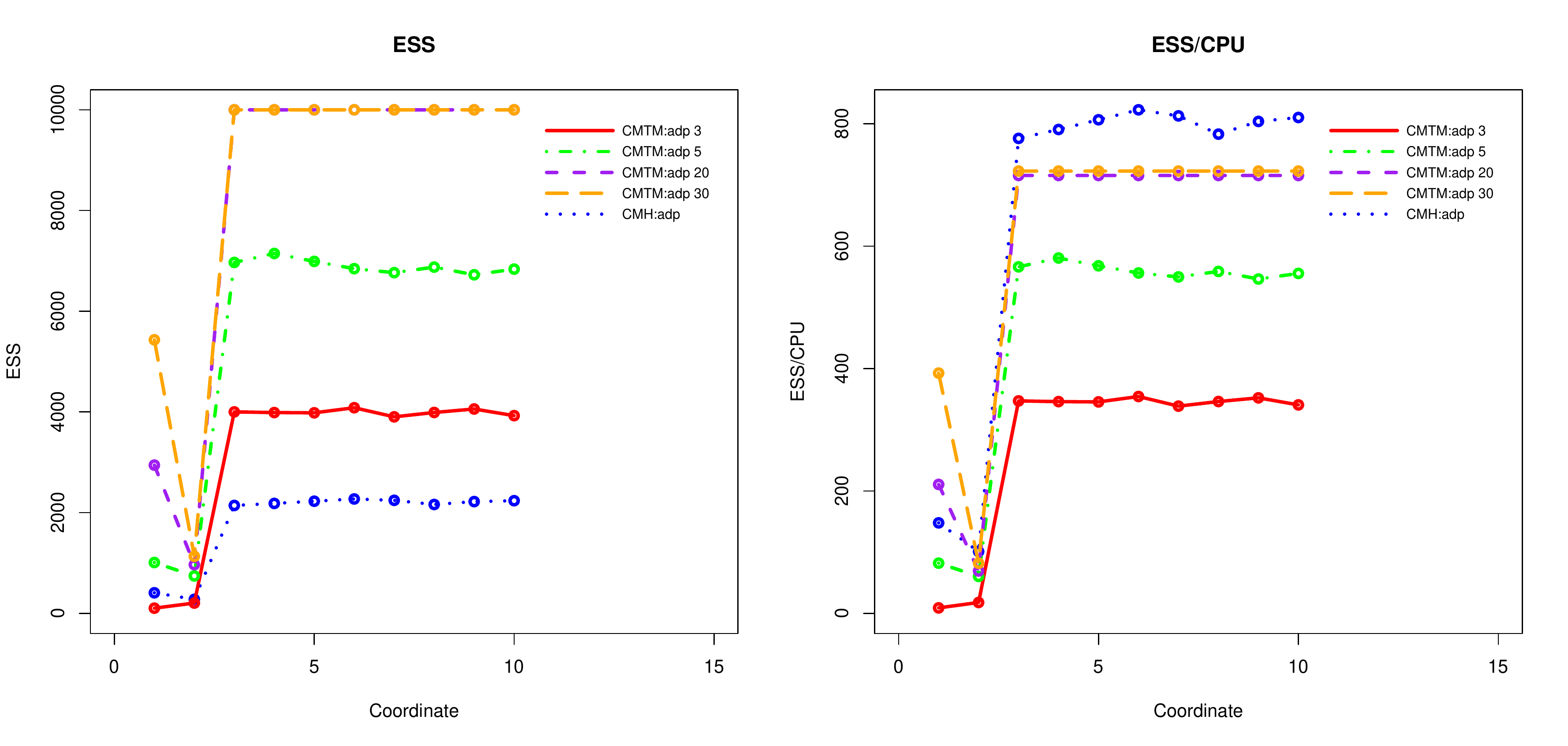}
\caption{Banana-shaped distribution. {\bf Top Row:} Comparison between non-adaptive CMH and  CMTM with 5 and 30 generic proposals.  The red and green lines show the ESS (left panel) or ESS/CPU (right panel) for  the CMTM with 5 and 30 proposals, respectively, the blue line presents the same for CMH. {\bf Bottom Row}: Comparison between ACMTM with 3, 5, 20 or 30 proposals and the adaptive CMH. The red, green, purple and orange lines show ESS (left panel) or ESS/CPU (right panel) for ACMTM with 3, 5, 20 and 30 proposals, respectively, and the blue shows the performance for the adaptive CMH.}
\label{fig:banana_ess_adp}
\end{figure}

Focusing on ESS plots, CMTM and adaptive CMTM with 30 proposals clearly outperform standard CMH and adaptive CMH in all coordinates. When
CPU time is taken into account then CMH and adaptive CMH performs a little better than CMTM algorithms on most coordinates.
However on coordinate 1, CMTM methods perform much better than CMHs, actually by a factor of 2.5 or more.

\subsection{Mixture of 20-dimensional Gaussians}
\label{sec: mixture}

\begin{figure}[H]
\includegraphics[height=0.33\textheight]{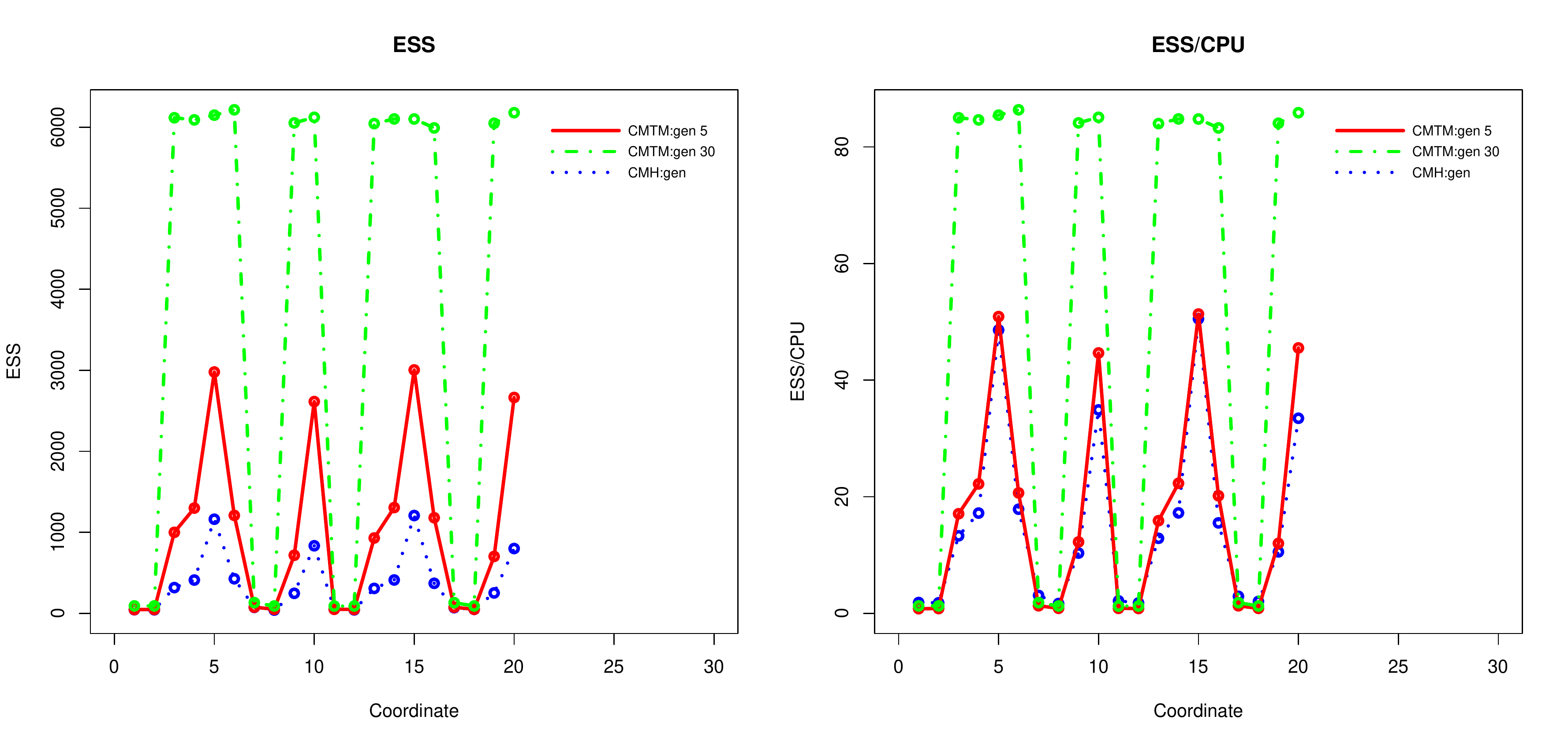}
\includegraphics[height=0.33\textheight]{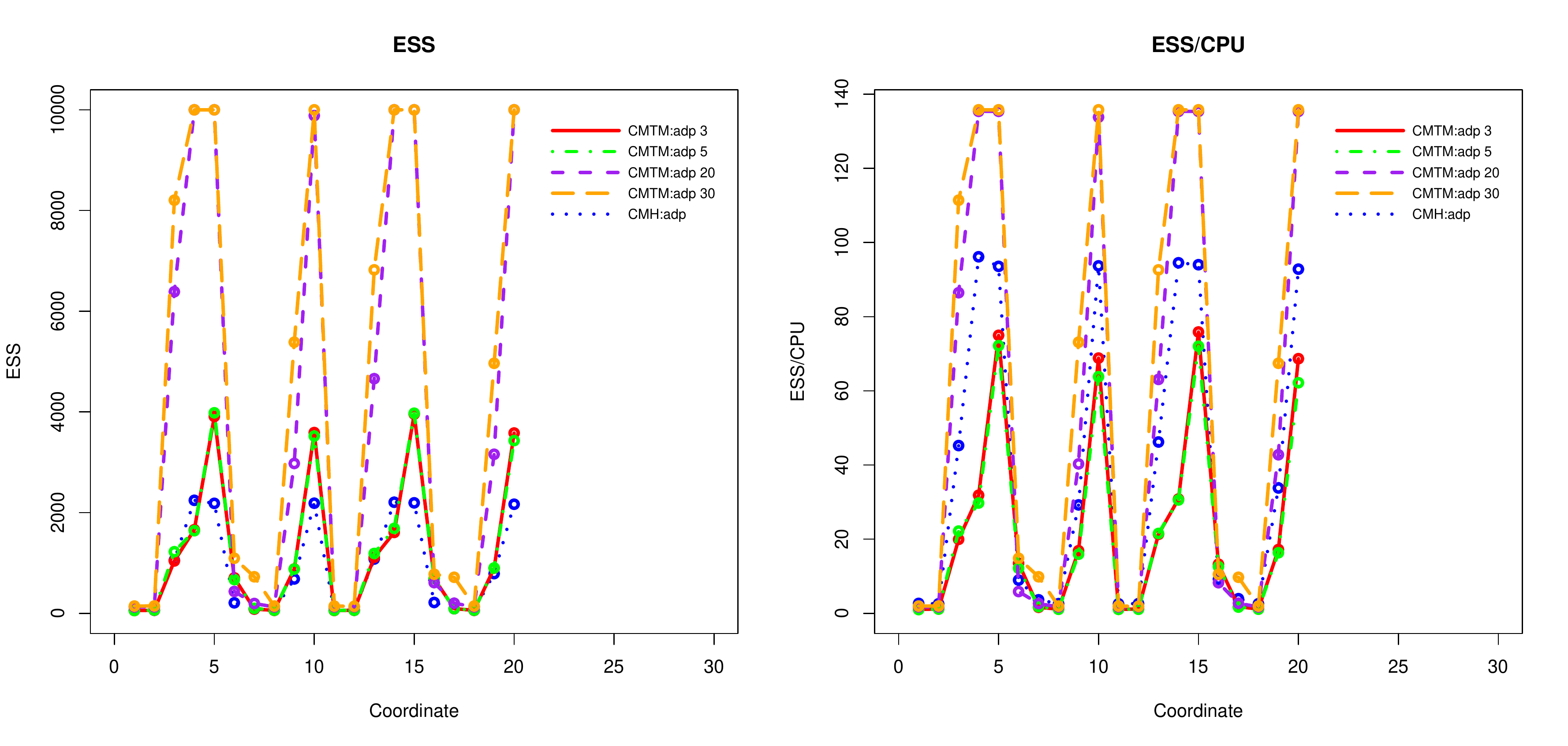}
\caption{20-dimensional mixture distribution. {\bf Top Row:} Comparison between non-adaptive CMH and  CMTM with 5 and 30 generic proposals.  The red and green lines show the ESS (left panel) or ESS/CPU (right panel) for  the CMTM with 5 and 30 proposals, respectively, the blue line presents the same for CMH. {\bf Bottom Row}: Comparison between ACMTM with 3, 5, 20 or 30 proposals and the adaptive CMH. The red, green, purple and orange lines show ESS (left panel) or ESS/CPU (right panel) for ACMTM with 3, 5, 20 and 30 proposals, respectively, and the blue shows the performance for the adaptive CMH.}
\label{fig:mixt_ess_adp}
\end{figure}

We are also examining the gains brought by the ACMTM in the case of multimodal distributions. We consider the mixture 
$$0.5N_20(\mu_1,\Sigma_1) + 0.5 N_20(\mu_2,\Sigma_2)$$
where
\begin{align*}
\mu_1 = &(5,  5 , 0, 0  ,0 , 0, 10 ,15 , 0 , 0  ,5 , 5,  0 , 0 , 0 , 0 ,10, 15 , 0,  0),\\
\mu_2 = &(10 ,10 , 0 , 0 , 0 , 0 , 7 ,20 , 0 , 0 ,10, 10 , 0 , 0 , 0 , 0 , 7 , 20 , 0 , 0),\\
\Sigma_1  = \mbox{diag}&( 16.00 ,16.00 , 0.25 , 4.00 ,1.00  ,0.01 ,9.00 ,16.00 , 9.00 ,\\
 &0.01 ,16.00, 16.00, 0.25  ,4.00,  1.00 , 0.01 , 9.00 ,16.00  ,9.00  ,0.01),\\
\Sigma_2 = \mbox{diag}&( 16.00 ,16.00 , 6.25 ,4.00 , 1.00 , 4.41 , 9.00  ,16.00 ,0.25  ,\\
& 0.01, 16.00 ,16.00 ,6.25 , 4.00 , 1.00 , 4.41 , 9.00, 16.00 , 0.25 , 0.01).\\
\end{align*}

 In this example, CMTM methods with 30 proposals (in each coordinate) is the most efficient in ESS and ESS/CPU. The comparison is reported in Figure \ref{fig:mixt_ess_adp}. We note that the 
adaptive and non adaptive versions of CMTM perform much better than the CMHs counterparts.

The ESS/CPU calculations  suggest   that the best performance is achieved when the  number of chains $m$ is between 20 and 30. When programming the examples (the programs are available as online supplemental material), we have taken advantage of the software R's ability to handle vectorial operations much more efficiently than loops. When similar savings  can be  obtained, we recommend using $m=20$ in practice. In instances where the likelihood is expensive to compute due to the large number of observations in the data, embarrassingly parallel strategies could be used efficiently in conjunction with ACMTM \citep{emb,cons,weierstrass,reih}.

It is also important to note that in all 3 examples described above  adaptive CMTM is always more efficient than CMTM with generic proposals. CPU time for both are about the same but ESS generally much larger for the latter. Hence adaptive CMTM generally produces much better results and it is
advisable to use it for real-world problems especially since it only requires a few lines of extra code.

\section{Conclusion and Discussion}

It is known that adaptive algorithms can be highly influenced by initial values given to their simulation parameters and by the quality of the chain during initialization period, i.e. the period during which  no modifications of the transition kernel take place. ACMTM is no exception, but some of its features can be thought of as means towards a more robust behaviour. For instance, the fact that we can start with multiple proposals makes it less likely that all initial values will be poor choices for a given coordinate.  The ACMTM is motivated by situations in which the sampler requires very different proposals across coordinates and across regions of the state space. In such situations, traditional adaptive samplers are known to fail unless special modifications are implemented \citep{cry, cbd}, but even these tend to underperform when $d$ is high. 

 The adaption mechanism is very rapid as the scales can change in multiple of 2's and is also stable since modifications to the kernel occur only if over selection from one of the boundary scale proposals is detected. Thus, even if proposal scales are not perfect but good enough, they would not  change much under this adaptive design.  
 
The increase in CPU time is the price we pay for the added flexibility of having multiple proposals and the ability to dynamically choose the ones that fit the region of the space so that acceptance rate and mixing rates are improved. And while this tends to attenuate the ACMTM's  efficiency, one cannot find among the algorithms we used for comparison in this paper  one that is performing better {\it on average} even after taking CPU time into account.  However, we recommend using ACMTM in difficult sampling problems (e.g. multimodal target, variable variances for the conditional distributions across the sample space) when other  approaches do not perform well. 

Finally, it is the authors belief that AMCMC samplers will  be  used in practice more if their motivation is intuitive and their implementation is easy enough. We believe that the ACMTM fulfills these basic criteria and further modifications can be easily implemented once  new needs are identified.

\section*{Acknowledgement}

Funding support for this work was provided by individual grants to RC and JSR from the Natural Sciences and Engineering Research Council of Canada.

\clearpage

\bibliographystyle{jasa}

\end{document}